\DeclareMathOperator*{\argmax}{arg\,max}
\DeclareMathOperator*{\argmin}{arg\,min}
\let\leq=\leqslant   
\let\geq=\geqslant
\newcommand{\RN}[1]{%
  \textup{\uppercase\expandafter{\romannumeral#1}}%
}
\title{Estimation of high-dimensional change-points under a group sparsity structure}
\author{Hanqing Cai and Tengyao Wang\footnote{Research supported by EPSRC grant EP/T02772X/1.}\\
University College London}
\date{\today}
\newtheorem{theorem}{Theorem}
\newtheorem{prop}[theorem]{Proposition}
\newtheorem{lemma}[theorem]{Lemma}
\begin{document}
\maketitle
\begin{abstract}
Change-points are a routine feature of `big data' observed in the form of high-dimensional data streams. In many such data streams, the component series possess group structures and it is natural to assume that changes only occur in a small number of all groups. We propose a new change point procedure, called \texttt{groupInspect}, that exploits the group sparsity structure to estimate a projection direction so as to aggregate information across the component series to successfully estimate the change-point in the mean structure of the series. We prove that the estimated projection direction is minimax optimal, up to logarithmic factors, when all group sizes are of comparable order. Moreover, our theory provide strong guarantees on the rate of convergence of the change-point location estimator. Numerical studies demonstrates the competitive performance of \texttt{groupInspect} in a wide range of settings and a real data example confirms the practical usefulness of our procedure.
\end{abstract}

\section{Introduction}
Modern applications routinely generate time-ordered high-dimensional datasets, where many covariates are simultaneously measured over time. Examples include wearable technologies recording the health state of individuals from multi-sensor feedbacks \citep{HanlonAnderson2009}, internet traffic data collected by tens of thousands of routers \citep{PengLeckieRamamohanarao2004} and functional Magnetic Resonance Imaging (fMRI) scans that record the time evolution of blood oxygen level dependent (BOLD) chemical contrast in different areas of the brain \citep{AstonKirch2012}. The explosion in number of such high-dimensional data streams calls for methodological advances for their analysis.

Change-point analysis is an essential statistical technique used in identifying abrupt changes in a time series. Time points at which such abrupt change occurs are called `change-points'. Through estimating the location of change-points, we can divide the time series into shorter segments that can be analysed using methods designed for stationary time series. Moreover, in many applications, the estimated change-points indicate specific events that are themselves of great interest. In the examples mentioned in the previous paragraph, they can be used to raise alarms about abnormal health events, detect distributed denial of service attacks on the network and pinpoint the onset of certain brain activities.

Classical change-point analysis focuses on univariate time series. The current state-of-art methods including \citet{KFE2012, FMS2014, Fryzlewicz2014}. However, classical univariate change-point methods are often inadequate for high-dimensional datasets that are routinely encountered in modern applications. When applied componentwise, they are often sub-optimal as signals can spread over many components. As a result, several new methodologies have been proposed to test and estimate change-points in the high-dimensional settings. These include methods that apply a simple $\ell_2$ or $\ell_\infty$ aggregation of test statistics across different components \citep{HorvathHuskova2012, Jirak2015}, and more complex methods such as a scan-statistics based approach by \citet{EnikeevaHarchaoui2019}, the Sparsified Binary Segmentation algorithm by \citet{ChoFryzlewicz2015}, the double CUSUM algorithm of \citet{Cho2016} and a projection-based approach by \citet{WS2018}. 

To get around the issue of the curse of dimensionality, existing high-dimensional change-point methods often assume that the signal of change possesses some form of sparsity. For example, in the high-dimensional mean change setting studied in \citet{Jirak2015, ChoFryzlewicz2015, WS2018, EnikeevaHarchaoui2019}, it is assumed that the difference in mean before and after a change-point is nonzero only in a small subset of coordinates. While the sparsity assumption greatly reduces the complexity of the original high-dimensional problem, it often does not capture the the full extent of the structure in the vector of change available in real data applications. For instance, in many applications, the coordinates of the high-dimensional vectors are naturally clustered into groups and coordinates within the same group tend to change together. At each change-point, only a small number of groups will undergo a change. Such a group sparsity change-point structure is useful in modelling many practical applications.   Examples  include  financial  data  stream  where  changes  are  often  grouped  by industry  sectors  and  a  small  number  of  sectors  may  experience  virtually  simultaneous market shocks.  Also, in functional magnetic resonance imaging data, voxels belonging to the same brain functional regions tend to change simultaneously over time. Similar group sparsity assumptions have been made in other statistical problems including \citet{YuanLin2006,WL2008,simon2013}.

    In this work, we provide a new high-dimensional change-point methodology that exploits the group sparsity structure of the changes. More precisely, given pre-specified grouping information of all the coordinates, our algorithm, named \texttt{groupInspect} (standing for \textbf{group}-based \textbf{in}formative \textbf{s}parse \textbf{p}rojection \textbf{e}stimator of \textbf{c}hange-poin\textbf{t}s), will first estimate a vector of projection that is closely aligned with the true vector of change at each change-point. It will then project the high-dimensional data series along this estimated direction and apply a univariate change-point method on the projected series to identify the location of the change. The above procedure can be combined with a wild binary-segmentation algorithm \citep{Fryzlewicz2014} to recursively identify multiple change-points. We show that, in a single change-point setting, the projection direction estimator employed in \texttt{groupInspect} has a minimax optimal dependence, up to logarithmic factors, on both the $\ell_0$ sparsity parameter and the group-sparsity parameter, representing respectively the number of nonzero elements and the number of nonzero groups in the vector of change. Furthermore, \texttt{groupInspect} achieves a $\sqrt{\log(n)} / (n\vartheta^2)$ rate of convergence for the estimated location of a single change-point, where $\vartheta$ denotes the $\ell_2$ norm of the vector of change, which up to logarithmic factors is minimax optimal. 

The outline of the paper is as follows. In Section~\ref{Sec:Description}, we describe the formal setup of our problem. The \texttt{groupInspect} methodology is then introduced in Section~\ref{Sec:Method}, with its theoretical performance guarantees provided in Section~\ref{Sec:Theory}. We illustrate the empirical performance of \texttt{groupInspect} via simulatinos and a real-data example in Section~\ref{Sec:Simulations}. Proofs of all theoretical results are deferred to Section~\ref{Sec:Proofs}, and ancillary results and their proofs are given in Section~\ref{Sec:Ancillary}.

\subsection{Notation}
For any positive integer $n$, we write $[n] = \{1, \dots, n\}$. For a vector $v = (v_1, \ldots, v_n)^\top \in \mathbb{R}^n$, we define $\|v\|_0 = \sum_{i=1}^n \mathbbm{1}_{\{v_i \neq 0\}}$, $\|v\|_\infty = \max_{i \in [n]} |v_i|$ and $\|v\|_q=\bigl\{\sum_{i=1}^{n} (v_i)^q\bigr\}^{1/q}$ for any positive integer $q$, and let $\mathbb{S}^{n-1} = \{ v \in \mathbb{R}^n : \| v\|_2 = 1\}$.  For a matrix $A\in\mathbb{R}^{p\times n}$, we write $\|A\|_*$ for its nuclear norm and write $\|A\|_{\mathrm{F}}$ for its Frobenius norm. 

For any $S\subseteq [n]$, we write $v_S$ for the $|S|$-dimensional vector obtained by extracting coordinates of $v$ in $S$. For a matrix $A\in\mathbb{R}^{p\times n}$, $J\in [p]$ and $S\in[n]$, we write $A_{J,S}$ for the submatrix obtained by extracting rows and columns of $A$ indexed by $J$ and $S$ respectively. When $S = [n]$, we abbreviate $A_{J,[n]}$ by $A_{J}$. When $S=\{t\}$ is a single element set, we slightly abuse notation and write $A_{J,t}$ instead of $A_{J,\{t\}}$. 

Given two sequences $(a_n)_{n\in\mathbb{N}}$ and $(b_n)_{n\in\mathbb{N}}$ such that $a_n, b_n > 0$ for all $n$, we write $a_n \lesssim b_n$ (or equivalently $b_n \gtrsim a_n$) if $a_n \leq C b_n$ for some universal constant $C$.  

\section{Problem description}
\label{Sec:Description}
Let $X_1,\ldots, X_n$ be independent random vectors with distribution:
\begin{equation}
\label{Eq:DGM1}
X_t \sim N_p(\mu_t,\sigma^2 I_p), \quad 1\leq t\leq n,
\end{equation}
which we can combine into a single data matrix $X \in \mathbb{R}^{p\times n}$. We assume that the sequence of mean vectors $(\mu_t)_{t=1}^n$ undergoes changes at times $z_i \in\{1,\ldots,n-1\}$ for $i\in\{1,\ldots,\nu\}$, in the sense that 
\begin{equation}
\label{Eq:DGM3}
\mu_{z_i+1}=\cdots=\mu_{z_{i+1}}=:\mu^{(i)}, \quad \forall\, i \in \{0,\ldots,\nu\},
\end{equation}
where we use the convention that $z_0=0$ and $z_{\nu+1}=n$. We assume that consecutive change-points are sufficiently separated in the sense that
\begin{equation*}
    \min\{z_{i+1}-z_i: 0\leq i\leq\nu\}\geq n\tau.
\end{equation*}
Suppose further that each of the $p$ coordinates belong to (at least) one of $G$ groups. Specifically, let $\mathcal{J}_g$ denotes the set of indices associated with the $g$th group for $g\in\{1,\ldots,G\}$, we have that
\begin{equation}
\label{Eq:DGM2}
\bigcup_{g=1}^G \mathcal{J}_g = [p].
\end{equation}
We assume that coordinates in the same group will tend to change together. We will consider both the case of overlapping and non-overlapping groups. In the latter scenario, each coordinate belongs to a unique group and $(\mathcal{J}_g)_{g\in[G]}$ forms a partition of $[p]$. 

Our goal is to estimate the locations of change $z_1,\ldots,z_\nu$ from the data matrix $X$ and the pre-specified grouping information $(\mathcal{J}_g)_{g\in[G]}$. Motivated by \citet{WS2018}, the best way to aggregate the component series so as to maximise the signal-to-noise ratio around the $i$th change-point is to project the data along a direction close to the vector of change $\theta^{(i)}= \mu^{(i)} - \mu^{(i-1)}$. Let $v^{(i)}$ be the unit vector parallel to $\theta^{(i)}$: 
\begin{equation*}
\label{Eq:OracleDirection}
    v^{(i)}=\theta^{(i)}/\|\theta^{(i)}\|_2,
\end{equation*}
which we will call the oracle direction for the $i$th change-point. We measure the quality of any estimated projection direction $\hat v$ with the Davis--Kahan sin $\theta$ loss \citep{DavisKahan1970}
\[
L(\hat v, v^{(i)}) = \sqrt{1 - (\hat v^\top v^{(i)})^2}
\]
and measure the quality of the subsequent location estimator $\hat z_i$ by $\mathbb{E}|\hat z_i - z_i|$.

The difficulty of the estimation task depends on both the noise level $\sigma$ and the vector of change $\theta^{(i)}= \mu^{(i)} - \mu^{(i-1)}$.  More precisely, we assume that the change is localised in a small number of the $G$ groups as defined in \eqref{Eq:DGM2}. Define $\phi: \mathbb{R}^p\to \mathbb{R}^G$ such that $\phi(x)=(\|x_{\mathcal{J}_1}\|_2, \|x_{\mathcal{J}_2}\|_2, \ldots, \|x_{\mathcal{J}_G}\|_2)^\top $, we assume that
\begin{equation}
\label{Eq:DGM4}
\|\theta^{(i)}\|_0\leq k, \quad \|\phi(\theta^{(i)})\|_0\leq s \quad \text{and} \quad \|\theta^{(i)}\|_2 \geq \vartheta.
\end{equation}

\section{Methodology}
\label{Sec:Method}
\subsection{Single change-point estimation}
\label{Sec:SingleCP}
Initially, we will consider estimation of a single change-point, where $\nu=1$. This can be extended to estimate multiple change-points in conjunction with top-down approaches such as wild binary segmentation, which we will discuss in Section~\ref{Sec:MultipleCP}. 

We define the CUSUM transformation $\mathcal{T}: \mathbb{R}^{p\times n}\to \mathbb{R}^{p\times (n-1)}$ by
\begin{equation}
\label{eq:cusum}
  \mathcal{T}(M)_{j,t}=\sqrt{\frac{t(n-t)}{n}}\bigg(\frac{1}{n-t}\sum_{r=t+1}^nM_{j,r}-\sum_{r=1}^t\frac{1}{t}M_{j,r}\bigg), 
\end{equation}
and compute the CUSUM matrix $T=\mathcal{T}(X)$. As discussed in Section~\ref{Sec:Description}, our general strategy is to use the matrix $T$ to estimate a projection direction that is well-aligned with the direction of change, and then project the data along this direction to estimate the change-point location from the univariated projected series. More precisely, we would like to solve for
\begin{equation}
\label{Eq:NonConvex}
\hat v \in \argmax_{u\in\mathbb{S}^{p-1}, \|\phi(u)\|_0\leq s}  \|u^\top T\|_2,
\end{equation}
where, $\mathbb{S}^{p-1}=\{x\in \mathbbm{R}^p: \|x\|_2=1\}$.
However, the above optimisation problem is non-convex due to the group-sparsity constraint. Consequently, we perform the following convex relaxation of the above problem. We first note that the set of optimisers of~\eqref{Eq:NonConvex} is equal to the set of leading left singular vectors of 
\[
\argmax_{\substack{M\in\mathbb{R}^{p\times (n-1)}:\|M\|_*=1, \mathrm{rank}(M)=1\\ \sum_{g\in[G]}\mathbbm{1}_{\{\|M_{\mathcal{J}_g}\|_{\mathrm{F}}\neq 0\}\leq s}}} \langle M,T\rangle,
\]
We relax the above matrix-variate optimisation problem by dropping the combinatorial rank constraint, and replacing the nuclear norm constraint set by the larger Frobenius norm set of $\mathcal{S}=\{M\in\mathbb{R}^{p\times (n-1)}: \|M\|_F\leq 1\}$.  The constraint that $M$ has at most $s$ groups of non-zero rows can be written as an $\ell_0$ constraint on the vector of Frobenius norms of such submatrices, i.e.\ $\|(\|M_{\mathcal{J}_g}\|_F:g\in\{1,\ldots,G\})\|_0 \leq s$.
Motivated by the group lasso penalty \citep{YuanLin2006}, we replace this group sparsity constraint with a \emph{group norm} penalty, where the group norm for a matrix $M\in\mathbb{R}^{p\times (n-1)}$ is defined as
\[
\|M\|_\mathrm{grp}=\sum_{g=1}^{G} p_g^{1/2}\|M_{\mathcal{J}_g}\|_{2,1},
\]
where $\|M_{\mathcal{J}_g}\|_{2,1}$ is the sum of column $\ell_2$ norms of the submatrix $M_{\mathcal{J}_g}$ and $p_g=|\mathcal{J}_g|$. Overall, we obtain the following optimisation problem:
\begin{equation}
\label{Eq:GrpSparsityOptimization}
\hat M \in \argmax_{M\in \mathcal{S}}\big\{\langle T,M\rangle- \lambda \|M\|_{\mathrm{grp}}\bigr\},
\end{equation}
where $\lambda\in[0,\infty)$ is a regularization parameter. 

If the groups are non-overlapping, in the sense that $\mathcal{J}_g\cap \mathcal{J}_{g'}=\emptyset$ for all $g\neq g'$, then we see from Proposition~\ref{prop: closed form} that~\eqref{Eq:GrpSparsityOptimization} has a closed form solution
\begin{equation}
\hat M=\frac{T-R^*}{\left\lVert T-R^*\right\rVert_\mathrm{F}},
\label{Eq:ClosedForm}
\end{equation}
where $R_{\mathcal{J}_g,t}^*=T_{\mathcal{J}_g,t} \min\big\{\frac{\lambda p_g^{1/2}}{\|T_{\mathcal{J}_g,t}\|_2}, 1\big\}$. 

For overlapping groups, \eqref{Eq:GrpSparsityOptimization} can be optimised using Frank--Wolfe algorithm \citep{FK1956}, as described in Algorithm~\ref{alg:frank wolfe}. We first compute the gradient of the objective function which is the step 4 in Algorithm~\ref{alg:frank wolfe}. We then project the $\hat{M}$ back onto $\mathcal{S}$. 

After solving the optimization problem, we can obtain the estimated projection direction $\hat{v}$ by computing the leading left singular vector of $\hat{M}$. Then, we project the data along $\hat{v}$ to obtain a univariate series for which existing one-dimensional change-point estimation methods apply. Specifically, we perform the CUSUM transformation over the projected data series, and locate the change-point by the maximum absolute value of the CUSUM vector. The full procedure is described in Algorithm~\ref{alg:single change}.

\begin{algorithm}[htbp]
 \caption{Frank--Wolfe algorithm for optimising~\eqref{Eq:GrpSparsityOptimization}}
  \label{alg:frank wolfe}
    \KwIn{ $T\in\mathbb{R}^{p\times (n-1)}$, grouping $(\mathcal{J}_g)_{g\in[G]}$, $\lambda>0$ and $\epsilon>0$.}
    Initialise $\hat M^{[0]}= T / \|T\|_{\mathrm{F}}$ and $t=0$.\\
    \Repeat{$\|\hat M^{[t+1]} - \hat M^{[t]}\|_{\mathrm{F}} \leq \epsilon$}{
    $t\leftarrow t+1$\\
      Compute $G^{[t]} = (G_1^{[t]},\ldots,G_p^{[t]})^\top \in \mathbb{R}^{p\times (n-1)}$ such that
      \[
    G_{j,t}^{[t]} \leftarrow T_{j,t}-\sum_{g:j\in \mathcal{J}_g}\lambda_g\frac{M_{j,t}^{[t-1]}}{\|M_{\mathcal{J}_{g,t}}^{[t-1]}\|_{\mathrm{F}}}, 
    \] where $\lambda_g=p_g^{1/2}\lambda$\\
    \textbf{if $G^{[t]} = 0$ then break}\\
    Compute
    \[
    \tilde{M}^{[t]}=\frac{t}{t+2} M^{[t-1]}+\frac{2}{t+2} \frac{G^{[t]}}{\|G^{[t]}\|_{\mathrm{F}}},
    \]\\
    Normalise $\hat M^{[t]} \leftarrow \tilde{M}^{[t]} / \|\tilde{M}^{[t]}\|_{\mathrm{F}}$
    }
   \KwOut{$\hat M^{[t]}$}
\end{algorithm}

\begin{algorithm}[htbp]
 \caption{Single change-point estimation procedure for data with group structure}
  \label{alg:single change}
    \KwIn{$X\in \mathbb{R}^{p\times n}$, $(\mathcal{J}_g)_{g\in[G]}$, and $\lambda > 0$}
    Compute $T \leftarrow \mathcal{T}(X)$ as in~\eqref{eq:cusum}.\\
    Solve 
    \[
    \hat M \in \argmax_{M\in \mathcal{S}} \bigl\{\langle T, M\rangle - \lambda \|M\|_{\mathrm{grp}}\bigr\}
    \]
    using either the closed-form solution in~\eqref{Eq:ClosedForm} if groups are non-overlapping, or Algorithm~\ref{alg:frank wolfe}.\\
    Let $\hat v$ be the leading left singular vector of $\hat M$.\\
    Estimate $z$ by $\hat{z}=\argmax_{1\leq t\leq n-1}|\hat{v}^\top T_t|$, where $T_t$ is the $t$th column of $T$.\\
    \KwOut{$\hat{z}$, $\bar{T}_{\max}=\hat v^\top T_z$}
\end{algorithm}

\subsection{Multiple change-point estimation}
\label{Sec:MultipleCP}
When the data matrix possess multiple change-points, we may combine Algorithm~\ref{alg:single change} with a top-down approach, such as the wild binary segmentation \citep{Fryzlewicz2014}, to recursively identify all the change-points. Specifically, we start by drawing a large number of random intervals $[s_1,e_1],\ldots,[s_Q,e_Q]$ and apply Algorithm~\ref{alg:single change} to the data matrix $X$ restricted to each of these time intervals to obtain $Q$ candidate change-point locations. We then aggregate $Q$ candidate change-point locations to choose the one with the maximum projected CUSUM statistics. If the value of the CUSUM statistic at the best candidate location is above a threshold $\xi$, we will admit this candidate location as a change-point and repeat the above process on the data submatrix to the left and right of this change-point. The pseudocode for the full procedure is given in Algorithm~\ref{alg:multiple change}.

\begin{algorithm}[htbp]
 \caption{Multiple change-point estimation procedure}
  \label{alg:multiple change}
  \SetKwProg{Fn}{Function}{}{end}

    \KwIn{$X\in \mathbb{R}^{p\times n}$, $(\mathcal{J}_g)_{g\in[G]}$, $\lambda > 0$, $\xi>0$, $Q\in\mathbb{N}$}
    Set $\hat{Z}\leftarrow \emptyset$\\
    Draw $Q$ pairs of integers $(s_1,e_1),\ldots,(s_Q,e_Q)$ uniformly at random from the set $\{(\ell,r)\in\mathbb{Z}^2:0\leq \ell< r \leq n\}$\\
     \Fn{\textbf{\textup{wbs($s$, $e$)}}}{
        Set $Q_{s,e}=\{q:s\leq s_q<e_q\leq e \}$\\
        \For {$q \in Q_{s,e}$}{
        $(\hat{z}^{[q]}, \bar{T}^{[q]}_{\max}) \leftarrow $ output of Algorithm~\ref{alg:single change} with inputs $(X_{j,t})_{j\in[p],t\in(s,e]}$ and $\lambda$}
        Find $q_0\in\argmax_{q\in Q_{s,e}}\bar{T}^{[q]}_{\max}$ and set $b\leftarrow s_{q_0}+\hat{z}^{[q_0]}$\\
        \If {$\bar{T}^{[q_0]}_{\max}\geq \xi$}{$\hat{Z}\leftarrow \hat{Z}\cup \{b\}$\\
        Run recursively $\textbf{wbs}(s,b)$ and $\textbf{wbs}(b,e)$}
    }
    \KwOut{$\hat Z$}
\end{algorithm}

\section{Theoretical guarantees}
\label{Sec:Theory}
In this section, we provide theoretical guarantees to the performance of the groupInspect algorithm. As we have noted in Section~\ref{Sec:Description}, a key to the successful change-point estimation in the current problem is a good estimator of the oracle projection direction $v=\theta/\|\theta\|_2$. 

The following theorem controls the sine angle risk of the estimated projection direction $\hat v$ in Step 3 of Algorithm~\ref{alg:single change}. We define $\mathcal{P}_{n,p}(s, k, \tau, \vartheta,\sigma^2, (\mathcal{J}_g)_{g\in[G]})$ to be the set of data distributions satisfying~\eqref{Eq:DGM1}, \eqref{Eq:DGM3}, \eqref{Eq:DGM2} and \eqref{Eq:DGM4}. For any $P\in\mathcal{P}$, we write $v(P) = \theta/\|\theta\|_2$ where $\theta$ is the difference between post-change and pre-change means.

\begin{theorem}
\label{thm:upper bound}
For a given grouping $(\mathcal{J}_g)_{g\in[G]}$, let $p_*=\min_{g\in[G]}|\mathcal{J}_g|$ and suppose further that there exists a universal constant $C_1>0$, such that $\max_{j\in[p]} |\{g:j\in \mathcal{J}_g\}| \leq C_1$.  Let $X \sim P \in \mathcal{P}_{n,p}(s,k, \tau,\vartheta,\sigma^2, (\mathcal{J}_g)_{g\in[G]})$ be a $p\times n$ data matrix,  let $\theta$ be the vector of change and let $\hat{v}$ be as in Step~3 of Algorithm~\ref{alg:single change} with input $X$, $(\mathcal{J}_g)_{g\in[G]}$ and $\lambda\geq\sigma( 1+\sqrt{4\log(Gn)/p_*})$. Then there exists $C>0$, depending only on $C_1$, such that
\begin{equation}
\label{Eq:TwoTermLoss}
\sup_{P\in\mathcal{P}_{n,p}(s,k,\tau,\vartheta,\sigma^2, (\mathcal{J}_g)_{g\in[G]})} \mathbb{P}_{P}\bigg\{\sin\angle{(\hat{v}, v)} \leq \frac{C\lambda k^{1/2}}{n^{1/2}\tau\vartheta}
\bigg\}\leq \frac{1}{nG}.
\end{equation}
\end{theorem}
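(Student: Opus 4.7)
The plan is to reduce the sine-angle bound to a Frobenius-norm estimate of $\hat M - M^*$, where $M^* = A/\|A\|_{\mathrm{F}}$ and $A = \mathcal{T}(\mu)$ is the noise-free CUSUM. The single-change structure forces $A = \theta\gamma^\top$ to have rank one, so $M^*$ has a spectral gap of $1$ and a standard sine-theta calculation gives $\sin\angle(\hat v, v) \lesssim \|\hat M - M^*\|_{\mathrm{F}}$; it therefore suffices to bound the right-hand side.

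First I would produce a high-probability bound on the noise $Z = \sigma\mathcal{T}(E)$ in the dual norm of $\|\cdot\|_{\mathrm{grp}}$. For non-overlapping groups this dual norm is $\max_{g,t}\|Z_{\mathcal{J}_g, t}\|_2/p_g^{1/2}$; since $\mathcal{T}$ preserves unit per-entry variance and independence within each column, $\|Z_{\mathcal{J}_g,t}\|_2^2/\sigma^2$ is $\chi^2_{p_g}$-distributed, so chi-square concentration together with a union bound over the $Gn$ pairs $(g,t)$ yields $\|Z\|_{\mathrm{grp}^*} \leq \sigma(1+\sqrt{4\log(Gn)/p_*})$ with probability at least $1 - 1/(nG)$. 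This is precisely the exceptional probability in the theorem and drives the choice of $\lambda$; in the overlapping case an analogous argument carries through up to $C_1$-dependent constants.

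Next I would set up a basic inequality from optimality of $\hat M$ against $M^*$. The rank-one identity $A = \|A\|_{\mathrm{F}} M^*$ together with $\|\hat M\|_{\mathrm{F}} \leq 1$ yields the curvature bound $\langle A, M^* - \hat M\rangle \geq (\|A\|_{\mathrm{F}}/2)\|\hat M - M^*\|_{\mathrm{F}}^2$, which, combined with the optimality condition, gives
\[
\|\hat M - M^*\|_{\mathrm{F}}^2 \leq \frac{2}{\|A\|_{\mathrm{F}}}\bigl\{\lambda(\|M^*\|_{\mathrm{grp}} - \|\hat M\|_{\mathrm{grp}}) + \langle Z, \hat M - M^*\rangle\bigr\}.
\]
Letting $S$ be the set of at most $s$ active groups of $\theta$ and $\Delta = \hat M - M^*$, decomposability of the group norm at $S$ gives $\|M^*\|_{\mathrm{grp}} - \|\hat M\|_{\mathrm{grp}} \leq \|\Delta_S\|_{\mathrm{grp}} - \|\Delta_{S^c}\|_{\mathrm{grp}}$, while H\"older against $\|Z\|_{\mathrm{grp}^*} \leq \lambda/2$ controls the stochastic term. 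These combine into the cone bound $\|\Delta_{S^c}\|_{\mathrm{grp}} \leq 3\|\Delta_S\|_{\mathrm{grp}}$ and the reduction $\|\Delta\|_{\mathrm{F}}^2 \leq 3\lambda\|\Delta_S\|_{\mathrm{grp}}/\|A\|_{\mathrm{F}}$.

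The main technical hurdle, and where I expect most of the work to lie, is a \emph{compatibility} inequality converting $\|\Delta_S\|_{\mathrm{grp}}$ back into $\|\Delta\|_{\mathrm{F}}$ with the advertised $\sqrt{k}$ dependence. A naive Cauchy--Schwarz over columns and active groups gives $\|\Delta_S\|_{\mathrm{grp}} \lesssim \sqrt{n\sum_{g\in S}p_g}\,\|\Delta\|_{\mathrm{F}}$, which is off by order $\sqrt{n\tau}$. To recover the correct rate one has to exploit (i)~the concentration of the CUSUM coefficients $\gamma_t$ around $z$, so that the column mass of $\Delta_S$ is effectively localized on a window of width $O(n\tau)$, and (ii)~the constraint $\|\theta\|_0\leq k$ together with the overlap bound $C_1$ to translate $\sum_{g\in S}p_g$ into a quantity controlled by $k$. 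Combining the resulting compatibility bound with $\|A\|_{\mathrm{F}} \geq \vartheta\sqrt{z(n-z)/n} \gtrsim \vartheta\sqrt{n\tau(1-\tau)}$ and the initial sine-theta step then delivers the claimed rate.
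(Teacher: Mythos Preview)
Your overall architecture---chi-square tail bound for the dual group norm of the noise, curvature inequality $\langle A, M^*-\hat M\rangle \geq \tfrac{\|A\|_{\mathrm{F}}}{2}\|\hat M-M^*\|_{\mathrm{F}}^2$, basic inequality plus H\"older, and restriction to the active groups---is exactly the route the paper takes. But the ``main technical hurdle'' you identify is a phantom, created by a miscalculation of $\|A\|_{\mathrm{F}}$.

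You write $\|A\|_{\mathrm{F}} \geq \vartheta\sqrt{z(n-z)/n} \gtrsim \vartheta\sqrt{n\tau}$. This quantity is only $\|\theta\|_2\gamma_z$, the $\ell_2$ norm of the \emph{single column} of $A$ at $t=z$, i.e.\ the largest entry of $\gamma$ times $\|\theta\|_2$. The full Frobenius norm is $\|A\|_{\mathrm{F}} = \|\theta\|_2\|\gamma\|_2$, and $\|\gamma\|_2$ scales like $n\tau$, not $\sqrt{n\tau}$: a direct calculation (or \citet[Lemma~3]{WS2018}) gives $\|\gamma\|_2 \geq z(n-z)/n \geq n\tau(1-\tau)$, so $\|A\|_{\mathrm{F}} \gtrsim n\tau\vartheta$. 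Plugging this correct lower bound into your own display, the ``naive'' Cauchy--Schwarz $\|\Delta_S\|_{\mathrm{grp}} \lesssim \sqrt{n\sum_{g\in S}p_g}\,\|\Delta\|_{\mathrm{F}}$ already yields
\[
\|\Delta\|_{\mathrm{F}} \;\lesssim\; \frac{\lambda\sqrt{nk}}{\|A\|_{\mathrm{F}}} \;\lesssim\; \frac{\lambda\sqrt{nk}}{n\tau\vartheta} \;=\; \frac{\lambda\sqrt{k}}{\sqrt{n}\,\tau\vartheta},
\]
which is precisely the target rate. No localisation of $\hat M$'s column mass is needed, and indeed none is available: nothing in the optimisation constrains $\hat M$ to concentrate near $t=z$, so the argument you sketch under point~(i) would not go through. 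The paper's Proposition~\ref{Prop:Deterministic} carries out exactly this ``naive'' route (with the triangle-inequality version $\|M^*\|_{\mathrm{grp}}-\|\hat M\|_{\mathrm{grp}}+\|\Delta\|_{\mathrm{grp}} \leq 2\sum_{g\in S}\sum_t\|\Delta_{\mathcal{J}_g,t}\|_2$ in place of your cone bound, but that is cosmetic).
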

We remark that the condition $\max_{j\in[p]} |\{g:j\in \mathcal{J}_g\}| \leq C_1$ is to control the extent of overlapping between different groups. Specifically, it requires that each coordinate can belong to at most $C_1$ groups. In the special case when all groups $\mathcal{J}_g$ are disjoint, which is often true in practical applications, then it suffices to take $C_1 = 1$. 

 We note that, when $\lambda = \sigma (1+\sqrt{4\log(Gn)/p_*})$, with high probability, the sine angle loss in~\eqref{Eq:TwoTermLoss} has an upper bound that is proportional to $\sigma k^{1/2}n^{-1/2}\tau^{-1}\vartheta^{-1}$, similar to what has been previously observed in \citet[Proposition~1]{WS2018}. However, Theorem~\ref{thm:upper bound} reveals an interesting interaction between the $\ell_0$ sparsity  $k$ and the group sparsity $s
$ when all groups are of comparable size. Specifically, for $\lambda=  \sigma (1+\sqrt{4\log(Gn)/p_*})$ and assuming that $\max_{g \in [G]} p_g \lesssim p_*$, then we can simplify~\eqref{Eq:TwoTermLoss} to obtain that
\[
\mathbb{E}\{\sin\angle{(\hat{v}, v)}\} \lesssim \sqrt\frac{\sigma^2\{k  + s\log(Gn)\}}{n\tau^2\vartheta^2}.
\]
In other words, the risk upper bound undergoes a phase transition as the number of coordinates per group increases above a $\log(Gn)$ level. Similar phase transitions have been previously observed in the context of high-dimensional linear model where the regression coefficients satisfy a group sparsity assumption \citep[see, e.g.][Theorem~3]{Caietal2019}. 

We now turn our attention to a minimax lower bound of the estimation risk of the oracle projection direction. Theorem~\ref{thm:lower bound} below shows that the phase transition observed in Theorem~\ref{thm:upper bound} is not due to the specific proof techniques employed but rather an intrinsic feature of the problem. 

\begin{theorem}
\label{thm:lower bound}
Suppose $s > 0$, $k>0$ and a grouping $(\mathcal{J}_{g})_{g\in[G]}$ satisfy that $\mathcal{J}_g\cap\mathcal{J}_{g'}=\emptyset$ for all $g\neq g'$ , $\min\{k, (s-1)\log(G/s)\}\geq 20$, and  $\sum_{r=1}^{s} p_{(G-r+1)} \geq k/2$, where $p_{(1)}\leq p_{(2)}\leq \cdots\leq p_{(G)}$ are order statistics of $p_1,\ldots,p_G$. Then for some universal constant $c>0$, we have 
\[
\inf_{\tilde v} \sup_{P\in\mathcal{P}_{n,p}(s,k,\tau,\vartheta,\sigma^2, (\mathcal{J}_g)_{g\in[G]})} \mathbb{E}_{P} L(\tilde v(X), v(P)) \geq c  \sqrt\frac{\sigma^2\{k + s\log (G/s)\}}{n\tau\vartheta^2},
\]
where the infimum is taken over the set of all measurable functions $\tilde v$ of the data $X$. 
\end{theorem}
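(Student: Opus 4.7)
The plan is to prove two minimax lower bounds separately, of rates $\sqrt{\sigma^2 k/(n\tau\vartheta^2)}$ and $\sqrt{\sigma^2 s\log(G/s)/(n\tau\vartheta^2)}$ respectively, and to combine them via $\max(\sqrt a,\sqrt b)\geq\sqrt{(a+b)/2}$. For both parts I restrict attention to the sub-class $\mathcal{P}'\subseteq\mathcal{P}_{n,p}$ consisting of distributions with $\nu=1$, $\mu^{(0)}=0$, $\mu^{(1)}=\theta$ and change-point fixed at $z=\lceil n(1-\tau)\rceil$, so that for any two such distributions $P_\theta$ and $P_{\theta'}$ a direct Gaussian computation gives
\[
\mathrm{KL}(P_\theta\,\|\,P_{\theta'}) \;=\; \frac{n-z}{2\sigma^2}\|\theta-\theta'\|_2^2 \;\leq\; \frac{n\tau}{2\sigma^2}\|\theta-\theta'\|_2^2.
\]
Both rates then follow from Fano's inequality applied to a suitable local packing of admissible $\theta$'s on the sphere of radius $\vartheta$.

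For the $\sqrt{\sigma^2 k/(n\tau\vartheta^2)}$ rate, I use the assumption $\sum_{r=1}^{s}p_{(G-r+1)}\geq k/2$ to fix a coordinate subset $S^\star\subseteq[p]$ with $|S^\star|=k_0=\lfloor k/2\rfloor$ contained in the union of at most $s$ of the $\mathcal{J}_g$'s. Inside the unit sphere of $\mathbb{R}^{S^\star}$, I centre a local packing at a fixed reference direction $v_0\in\mathbb{R}^{S^\star}$ with $\|v_0\|_2=1$: a standard Gilbert--Varshamov covering of the $(k_0-1)$-sphere in $v_0^\perp\cap\mathbb{R}^{S^\star}$ yields unit vectors $u_1,\ldots,u_N$, all orthogonal to $v_0$ and supported on $S^\star$, with $\min_{i\neq j}\|u_i-u_j\|_2\geq c_0$ and $\log N\geq c_1 k_0$. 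Setting $v_i=\cos(\alpha)v_0+\sin(\alpha)u_i$ and $\theta_i=\vartheta v_i$, every $\theta_i$ lies in $\mathcal{P}'$: $\|\theta_i\|_2=\vartheta$, $\|\theta_i\|_0\leq|S^\star|\leq k$ and $\|\phi(\theta_i)\|_0\leq s$. Elementary trigonometry gives $L(v_i,v_j)^2=\sin^2\alpha\{1-(u_i^\top u_j)^2\}\gtrsim\sin^2\alpha$ and $\|\theta_i-\theta_j\|_2^2=2\vartheta^2\sin^2\alpha(1-u_i^\top u_j)\leq 4\vartheta^2\sin^2\alpha$. Choosing $\sin^2\alpha$ of order $\sigma^2 k/(n\tau\vartheta^2)\wedge 1$ and invoking Fano's inequality (valid because $\log N\gtrsim k\geq 20$) then produces the first lower bound.

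For the $\sqrt{\sigma^2 s\log(G/s)/(n\tau\vartheta^2)}$ rate, I modify the construction so that the perturbation direction encodes the choice of active groups. Pick any distinguished coordinate $c_g\in\mathcal{J}_g$ for each $g\in[G]$, set aside a subset $T_0\subseteq[G]$ of size $s_0=\lfloor s/2\rfloor$ and take the reference direction $v_0=s_0^{-1/2}\sum_{g\in T_0}e_{c_g}\in\mathbb{S}^{p-1}$. Setting $s_1=s-s_0$, a Varshamov--Gilbert argument on binary strings of Hamming weight $s_1$ in $[G]\setminus T_0$ produces subsets $T_1,\ldots,T_N\subseteq[G]\setminus T_0$ of size $s_1$ with pairwise symmetric difference at least $s_1/2$ and $\log N\geq c_2 s\log(G/s)\geq 20c_2$ (using $(s-1)\log(G/s)\geq 20$). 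Define $u_{T_i}=s_1^{-1/2}\sum_{g\in T_i}e_{c_g}$ and $\theta_i=\vartheta\{\cos(\alpha)v_0+\sin(\alpha)u_{T_i}\}$. The disjointness of the groups guarantees $\mathrm{supp}(v_0)\cap\mathrm{supp}(u_{T_i})=\emptyset$, so $v_0\perp u_{T_i}$ and $\|\theta_i\|_0=s\leq k$, $\|\phi(\theta_i)\|_0=s$, $\|\theta_i\|_2=\vartheta$. The remaining computations mirror the previous paragraph and, with $\sin^2\alpha$ tuned to be of order $\sigma^2 s\log(G/s)/(n\tau\vartheta^2)\wedge 1$, deliver the second lower bound.

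I expect the main obstacle to be bookkeeping rather than substantive analysis: one must verify carefully that every packing element satisfies all constraints defining $\mathcal{P}_{n,p}(s,k,\tau,\vartheta,\sigma^2,(\mathcal{J}_g)_{g\in[G]})$, and track all constants through the Gilbert--Varshamov bound, the choice of $\alpha$, and Fano's inequality so that the right-hand side of Fano is a fixed positive fraction of $\sin\alpha$. The non-overlapping-groups hypothesis is exactly what makes the orthogonality $v_0\perp u_{T_i}$ automatic in the second construction, while the thresholds $k\geq 20$ and $(s-1)\log(G/s)\geq 20$ provide the slack needed to absorb the $\log 2$ term in Fano and the $\lfloor\cdot/2\rfloor$ rounding losses. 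Once both rates are established, combining them via $\sqrt{a}+\sqrt{b}\leq\sqrt{2(a+b)}$ completes the argument.
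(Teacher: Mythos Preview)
Your approach is correct and essentially mirrors the paper's: both prove the $k$ and $s\log(G/s)$ rates separately via Fano's inequality applied to local packings around a fixed anchor direction, then combine. The packing details differ only cosmetically---the paper uses a hypercube Varshamov--Gilbert construction for the $k$ term and, for the $s\log(G/s)$ term, anchors on a single group and spreads the signal uniformly across all coordinates of each active group, whereas you pack the sphere directly and place the signal on one coordinate per group (which in fact makes the verification of $\|\theta_i\|_0\leq k$ cleaner, requiring only $s\leq k$).

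Two small points to tidy up when you write it out. First, your identity $L(v_i,v_j)^2=\sin^2\alpha\{1-(u_i^\top u_j)^2\}$ is not exact: since $v_i^\top v_j=\cos^2\alpha+\sin^2\alpha\,u_i^\top u_j$, one gets $L(v_i,v_j)^2=\sin^2\alpha(1-u_i^\top u_j)\{1+\cos^2\alpha+\sin^2\alpha\,u_i^\top u_j\}$; your conclusion $L(v_i,v_j)\gtrsim\sin\alpha$ still follows from the packing separation. Second, your choice $z=\lceil n(1-\tau)\rceil$ gives $n-z\leq n\tau$, which may fall strictly below the class constraint $n-z\geq n\tau$; take $z=n-\lceil n\tau\rceil$ instead so that $n-z=\lceil n\tau\rceil\in[n\tau,n\tau+1]$ and the KL bound picks up only a harmless additive $1$.
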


The condition that $\sum_{r=1}^s p_{(G-r+1)} \geq k/2$ is to ensure that the upper bound $k$ on the $\ell_0$-sparsity is not too loose in the sense that $k$ is not too much larger than the cardinality of the union of the largest $s$ groups. If we assume that $\log(G/s)  \asymp \log(n)$, $\tau\asymp 1$ and $\max_{g\in[G]}p_g \lesssim p_*$, then the lower bound in Theorem~\ref{thm:lower bound} matches the upper bound of Theorem~\ref{thm:upper bound} up to universal constants, when all groups are non-overlapping.


After obtaining guarantees on the quality of the projection direction estimator, we now provide theoretical guarantees of the overall change-point procedure. We note that the projection direction estimator $\hat{v}$ is dependent on the CUSUM panel $T$. While this dependence is observed to be very weak in practice, it creates difficulties in analysing the projected CUSUM series $\hat{v}^\top T$ in Step 4 of Algorithm~\ref{alg:single change}. As such, for theoretical convenience, we will instead analyse a sample-splitting version of the algorithm. Specifically, we split the data into $X^{(1)}$ and $X^{(2)}$, consisting of odd and even time points respectively, as described in Algorithm~\ref{alg:split}. We use $X^{(1)}$ to estimate the projected direction $\hat{v}^{(1)}$ and then project $X^{(2)}$ along this direction to locate the change-point. Theorem~\ref{Thm:change-pointEstimate} below provides a performance guarantee for the estimated location of the change-point of this sample-splitting version of our procedure.
\begin{algorithm}[htbp]
 \caption{Change-point estimation procedure: sample splitting version }
 \label{alg:split}
    \KwIn{$X\in \mathbb{R}^{p\times n}$ and $\lambda > 0$}
    Define $X^{(1)}$ as $X^{(1)}_{j,t}=X_{j,2t-1}$ and $X^{(2)}$ as $X^{(2)}_{j,t}=X_{j,2t}$.\\
    Compute $T^{(1)} \leftarrow \mathcal{T}(X^{(1)})$ and $T^{(2)} \leftarrow \mathcal{T}(X^{(2)})$ as in~\eqref{eq:cusum}.\\
    Solve 
    \[
    \hat M^{(1)} \in \argmax_{M\in \mathcal{S}} \bigl\{\langle T^{(1)}, M\rangle - \lambda \|M\|_{\mathrm{grp}}\bigr\}
    \]
    using either the closed-form solution in~\eqref{Eq:ClosedForm} if groups are non-overlapping, or Algorithm~\ref{alg:frank wolfe}.\\
    Let $\hat v$ be the leading left singular vector of $\hat M^{(1)}$.\\
    Estimate $z$ by $\hat{z}=2\argmax_{1\leq t\leq n_1-1}|(\hat{v}^{(1)})^\top T_t^{(2)}|$, where $T_t^{(2)}$ is the $t$th column of $T^{(2)}$.\\
    \KwOut{$\hat{z}$}
\end{algorithm}

\begin{theorem}
\label{Thm:change-pointEstimate}
Given data matrix  $X \sim P \in \mathcal{P}_{n,p}(s, k, \tau, \vartheta, \sigma^2, (\mathcal{J}_g)_{g\in[G]})$, let $\hat{z} $ be the output from the Algorithm~\ref{alg:split} with input $X$ and $\lambda=\sigma(1+\sqrt{p_*^{-1}4\log(nG)})$. There exist universal constants $C$, $C'>0$ such that, if $n\geq 12$ is even, $z$ is even, and
\begin{equation*}
\label{eq:cond 1}
    \frac{C\sigma\sqrt{k}}{\vartheta\tau\sqrt{n}}\bigg(\frac{1+\sqrt{4\log(Gn)}}{p_*}\bigg)\leq 1,
\end{equation*}
then,
\begin{equation*}
\label{eq:cond 2}
\mathbb{P}\bigg\{\frac{1}{n}|\hat{z}-z|\leq \frac{C'\sigma^2(1+\sqrt{4\log(n)})}{n\vartheta^2}\bigg\}\geq 1-\frac{20\log n}{n}.
\end{equation*}
\end{theorem}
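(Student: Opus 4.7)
My plan is to exploit the sample-splitting structure of Algorithm~\ref{alg:split}: because $X^{(1)}$ and $X^{(2)}$ are independent, the projection direction $\hat v^{(1)}$ is independent of $T^{(2)}$, so the localisation step becomes a genuine univariate change-point problem along a deterministic direction. First I would apply Theorem~\ref{thm:upper bound} to the half-sample $X^{(1)}\in\mathbb{R}^{p\times n/2}$; since $z$ is even, $z_1:=z/2$ is an integer change-point for $X^{(1)}$ and the parameters $\tau,\vartheta,\sigma,k,s$ carry over. The prescribed $\lambda$, combined with the theorem's hypothesis, forces
\[
\frac{C\lambda\sqrt{k}}{\sqrt{n/2}\,\tau\vartheta}\leq \tfrac{1}{2},
\]
so with probability at least $1-2/(nG)$ we have $|\langle\hat v^{(1)},v\rangle|\geq\sqrt{3}/2$, and consequently the effective signal $\vartheta':=|(\hat v^{(1)})^\top\theta|$ satisfies $\vartheta'\geq \tfrac{\sqrt 3}{2}\vartheta$.

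Conditioning on $X^{(1)}$ and on the good event of Step~1, the projected series $Y_t:=(\hat v^{(1)})^\top X^{(2)}_t$ is a vector of independent $N(\cdot,\sigma^2)$ scalars with a single mean jump of magnitude $\vartheta'$ at $z_1$, and Algorithm~\ref{alg:split} returns $\hat z=2\hat z_1$ where $\hat z_1=\argmax_t|T^Y_t|$. Decomposing $T^Y=\mu^Y+W$, the tent $\mu^Y$ peaks at $z_1$ with $|\mu^Y_{z_1}|-|\mu^Y_t|\gtrsim\vartheta'\,|t-z_1|/\sqrt{n\tau}$ for $t$ near $z_1$, while $W$ is centred Gaussian with $\mathrm{Var}(W_t-W_{z_1})\lesssim \sigma^2|t-z_1|/n$. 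A dyadic peeling argument over $|t-z_1|\in[2^j,2^{j+1})$, combined with Gaussian maximal inequalities, then shows that the noise cannot bridge the signal drop once $|t-z_1|\geq C'\sigma^2(1+\sqrt{4\log n})/\vartheta^2$, except on an event of conditional probability at most $18\log n/n$. Adding the $2/(nG)\leq 2/n$ failure probability from Step~1 and converting from $\hat z_1$ to $\hat z$ by a factor of $2$ yields the claimed bound.

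The principal obstacle is obtaining the $\sqrt{\log n}$ (rather than the naive $\log n$) factor in the univariate localisation: one must match, at each dyadic scale, the linear-in-$|t-z_1|$ signal drop $\vartheta'\,2^j/\sqrt{n\tau}$ against the $\sqrt{2^j/n}\,\sigma$ Gaussian fluctuation of CUSUM increments, so that the union bound over the $O(\log n)$ dyadic blocks contributes only a $\sqrt{\log n}$ inflation rather than the $\log n$ one gets from a crude bound. Independence of $\hat v^{(1)}$ and $W$, inherited from the split, is essential here: it allows the peeling to be taken over time indices alone and avoids paying for an $\epsilon$-net over candidate projection directions.
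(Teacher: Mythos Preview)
Your proposal is correct and follows essentially the same route as the paper. The paper also applies Theorem~\ref{thm:upper bound} to the odd half-sample to place $\hat v^{(1)}$ in a cone of angle $\pi/6$ around $v$ (your $|\langle\hat v^{(1)},v\rangle|\geq\sqrt{3}/2$), then conditions on $\hat v^{(1)}$ to reduce Step~5 of Algorithm~\ref{alg:split} to a univariate CUSUM localisation; the only cosmetic difference is that the paper outsources your dyadic peeling to \citet[Lemmas~5 and~7 and the proof of Theorem~1]{WS2018} rather than rederiving the scale-by-scale comparison of the linear signal drop against the $\sqrt{|t-z_1|}$ partial-sum fluctuations. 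One minor point: the paper controls the partial sums $\sum_{r=1}^t\bar W_r - \sum_{r=1}^s\bar W_r$ directly (its event $\Omega_1$) rather than the CUSUM increments $W_t-W_{z_1}$, and it uses a separate $\|\bar E\|_\infty$ event to obtain an initial coarse localisation before the refined step; your single peeling argument implicitly combines these, which is fine but worth making explicit when you fill in details.
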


\section{Numerical studies}
\label{Sec:Simulations}
In this section, we provide some simulation results to demonstrate the empirical performance of the groupInspect method. In all our numerical studies, unless otherwise specified, we will assume that data are generated according to~\eqref{Eq:DGM1},~\eqref{Eq:DGM3},~\eqref{Eq:DGM2} and~\eqref{Eq:DGM4}, with $\sigma = 1$. In all simulations, we do not assume that $\sigma$ is known, or even equal across rows. Instead, we estimate the variance in each row using the mean absolute deviation of successive differences of the observations. We then standardise the data by the estimated row standard deviation. The \texttt{groupInspect} procedure is then applied to the standardised data with $\sigma = 1$. 
 
\subsection{Theory validation}
We first show that the practical performance of the \texttt{groupInspect} procedure is well captured by the theoretical results in Theorems~\ref{thm:upper bound} and~\ref{thm:lower bound}. There are two related measures of the signal sparsity in our problem, which are the total number of coordinates of change $k$ and the total number of groups with a change $s$. We conduct two sets of simulation experiments fixing one of these sparsity measures and varying the other. Specifically, for $n=1000$, $p\in\{600,1200,2400\}$ and $\vartheta\in\{1,2,4,8,16\}$, we split the $p$ coordinates into disjoint groups of $p_*$ coordinates per group, where $p_*$ is allowed to vary over all divisors of $60$. In the first set of experiments, we fix $k=60$ so that $s=k/p_*$ varies with $p_*$, whereas in the second set of experiments, we fix $s=3$ so that $k=sp_*$ varies with $p_*$. The vector of change is constructed so that the magnitude of change is equal across all coordinates of change. We will use the theoretical choice of tuning parameter $\lambda$ for both sets of experiments here. Figure~\ref{fig:theory} shows how the $\sin \theta$ loss, averaged over $100$ Monte Carlo repetitions, varies with $p_*$, for different choices of $p$ and $\vartheta$ in both settings. 

In the left panel of Figure~\ref{fig:theory}, where the number of signal coordinates $k$ is fixed, we see that the average loss decreases as $p_*$ increases. Furthermore, at a log-log scale, and for relatively large signal sizes of $\vartheta \in \{4,8, 16\}$, we see the loss curves follow an initial linear decreasing trend as $p_*$ increases before plateauing eventually. This is in agreement with the two terms contributing to the loss described in Theorem~\ref{thm:upper bound}. Specifically, for small $p_*$, we expect the second term of~\eqref{Eq:TwoTermLoss} to dominate and the loss decreases at a rate approximately proportional to $1/\sqrt{p_*}$ initially. For large $p_*$, we expect the first term of~\eqref{Eq:TwoTermLoss} to dominate and the loss will have minimal dependence on $p_*$. In the right panel of Figure~\ref{fig:theory}, where the number of signal groups $s$ is fixed, the average loss increases with $p_*$, as expected from our theory. It appears that for $s=3$ studied here, the first term of~\eqref{Eq:TwoTermLoss} is dominant and the average loss increases linearly at the log-log scale with respect to $p_*$. 

We further remark that in both panels of Figure~\ref{fig:theory}, the average loss for large $p_*$ shows equally spaced separation for the signal size $\vartheta$ in the dyadic grid $\{1,2,4,8,16\}$. This is in good agreement with the $1/\theta$ dependence of expected loss given in Theorem~\ref{thm:upper bound}. Finally, we note that the ambient dimension $p$ has minimal effect on the loss curves, for all signal strengths studied here. Again, this is predicted by our theory as the dimension $p$ enters the mean loss in~\eqref{Eq:TwoTermLoss} only through the $\log(Gn) = \log(pn/p_*)$ expression in the second term.

\begin{figure}[htbp]
\centering
\includegraphics[width=0.9\textwidth]{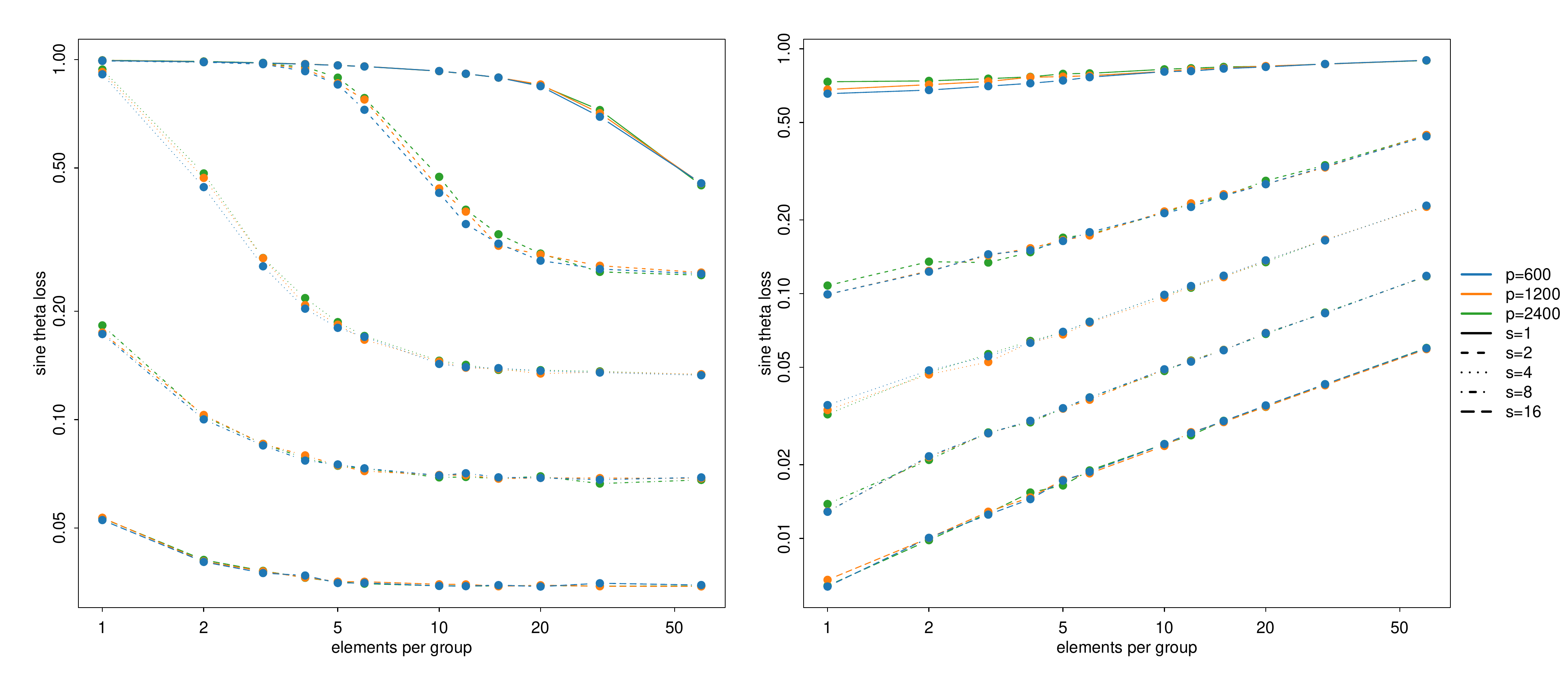} 
\caption{\label{fig:theory}Average loss (over 100 repetitions) of \texttt{groupInspect} for varying elements per group $p_*$, plotted on a log-log scale. Left panel: $k=60$ and $s=k/p_*$. Right panel: $s=3$ and $k=sp_*$. Other parameter: $n=1000$.}
\end{figure}
 
\subsection{Practical choice of tuning parameter}
\label{Sec:Tuning}
The theoretical choice of $\lambda$ turns out to be conservative in practical use. In this subsection, we will perform numerical simulations to suggest a suitable practical tuning parameter choice. We fix $n=1000$, $z=400$, $s=3$, $G\in\{10, 25\}$. The signal size $\vartheta$ is varied in $\{1,2,4,8,16\}$ and $p$ is chosen from $\{500,1000\}$. All groups are set to have equal size. For the choice of tuning parameters, we first form a logarithmic sequence of values between $0.1$ and $3$ with length $7$ and then times each value with the theoretically suggested value of $1+\sqrt{4p_*^{-1}\log(nG)}$ to form the sequence of the tuning parameter. For each setting of the signal size, we will run algorithm with all the $\lambda$ values and record the sine angle loss.

We plot $\sin \theta$ loss against $\lambda$ in Figure~\ref{fig:tun}. The $x$-axis is the log sequence. In most cases, the loss is minimized when tuning parameter value is half of the theoretical value. However, when the minimum loss is achieved by other values of $\lambda$, this lambda value can still achieve the loss which is close to optimal value. Therefore, we suggest that using $\lambda=1/2(1+\sqrt{4p_*^{-1}\log (nG)})$ in practical is less conservative.

 \begin{figure}[htbp]
\centering
\includegraphics[width=0.9\textwidth]{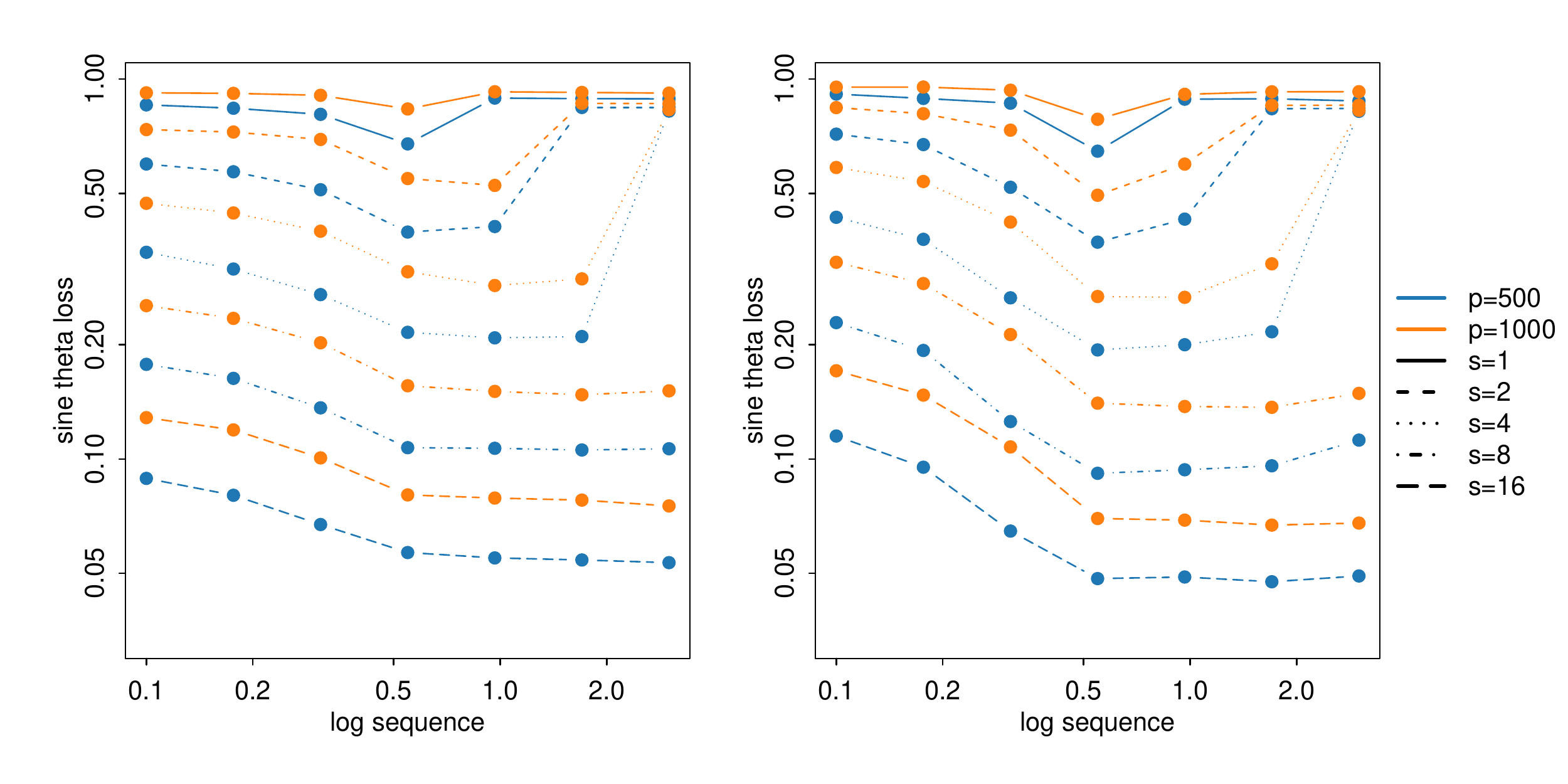} 
\caption{\label{fig:tun}Average loss (over 100 repetitions) of \texttt{groupInspect} for varying tuning parameter $\lambda$. Left panel: $G=10$. Right panel: $G=25$. Other parameter: $n=1000$, $s=3$.}
\end{figure}

\subsection{Comparison between different methods }

Now, we would like to compare our method with other existing change-point estimation procedures. As \texttt{groupInspect} is a two-stage procedure that first estimates a projection direction before localising the change-point on the projected series, we will investigate its performance both in terms of its accuracy in estimating the projection direction and the quality of the final change-point location estimator. For the former, we compare the estimated projection direction from \texttt{groupInspect} with that from the \texttt{inspect} algorithm. We measure the accuracy in terms of the sine angle loss introduced in Section~\ref{Sec:Description}. We use the recommended values for tuning parameters in both methods, i.e., $\sqrt{2^{-1}\log\{p\log n\}}$ in \texttt{inspect} as in \citet{WS2018} and $1/2(1+\sqrt{4p_*^{-1}\log (nG)})$ for \texttt{groupInspect} as suggested in Section~\ref{Sec:Tuning}. 

We fix $n=1000$, $p=1000$ and vary $\vartheta$ in $\{1,2,4,8,16\}$. We consider settings with both non-overlapping groups and overlapping groups. For the non-overlapping setting, we have $G=10$ groups of equal size $p_*=100$, whereas for the overlapping setting, we have $G=19$ groups of size $100$ each, where neighbouring groups overlap in exactly $50$ coordinates. Both methods have access to exactly the same data sets and the performance is averaged over 100 Monte Carlo repetitions.

Figure~\ref{fig:loss comparison} shows the comparison of the average sine angle loss between \texttt{groupInspect} and \texttt{inspect} over all signal sizes on a logarithmic scale, in both the non-overlapping and overlapping settings. In both cases, \texttt{groupInspect} outperforms the \texttt{inspect}  algorithm. From the left panel, we can see that the estimation accuracy of the projection direction using \texttt{groupInspect} is substantially better even when the signal is small.

We now turn our attention to the overall change-point localisation accuracy of the \texttt{groupInspect} procedure. To this end, we compare the mean absolute deviation of various high-dimensional change-point procedures over 100 Monte Carlo repetitions using the same data sets. In addition to \texttt{inspect}, we also compare against the $\ell_2$ aggregation procedures of \citet{HorvathHuskova2012}, the $\ell_\infty$ aggregation procedure of \citet{Jirak2015} and the double CUSUM procedure of \citet{Cho2016}. We set $n=1000$, $p\in\{500,1000,2000\}$, $\vartheta \in\{0.25, 0.5, 1, 2, 4\}$. The simulation results are presented in Table~\ref{table:compare}. For simplicity, we have only shown the results for 10 equal-sized non-overlapping groups here, but qualitatively similar results were obtained in other settings as well. We see that \texttt{groupInspect} is very competitive over a wide range of dimensions and signal-to-noise ratio settings, though the benefit of using the group sparsity structure via \texttt{groupInspect} is most apparent in low signal-to-noise ratio settings where the change-point estimation problem is more difficult.

\begin{figure}[htbp]
\centering
\begin{tabular}{cc}
\includegraphics[width=5cm]{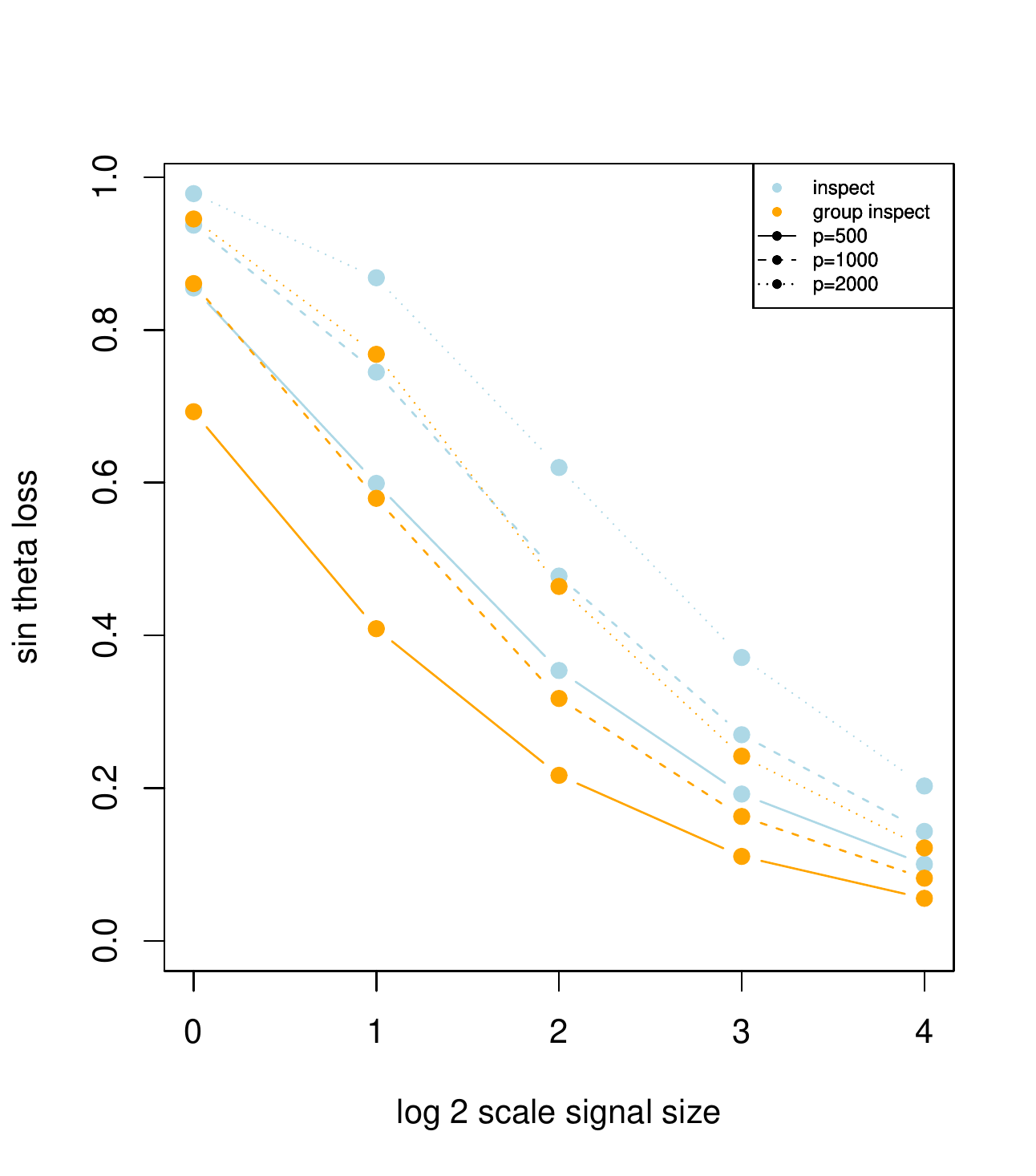} &
\includegraphics[width=5cm]{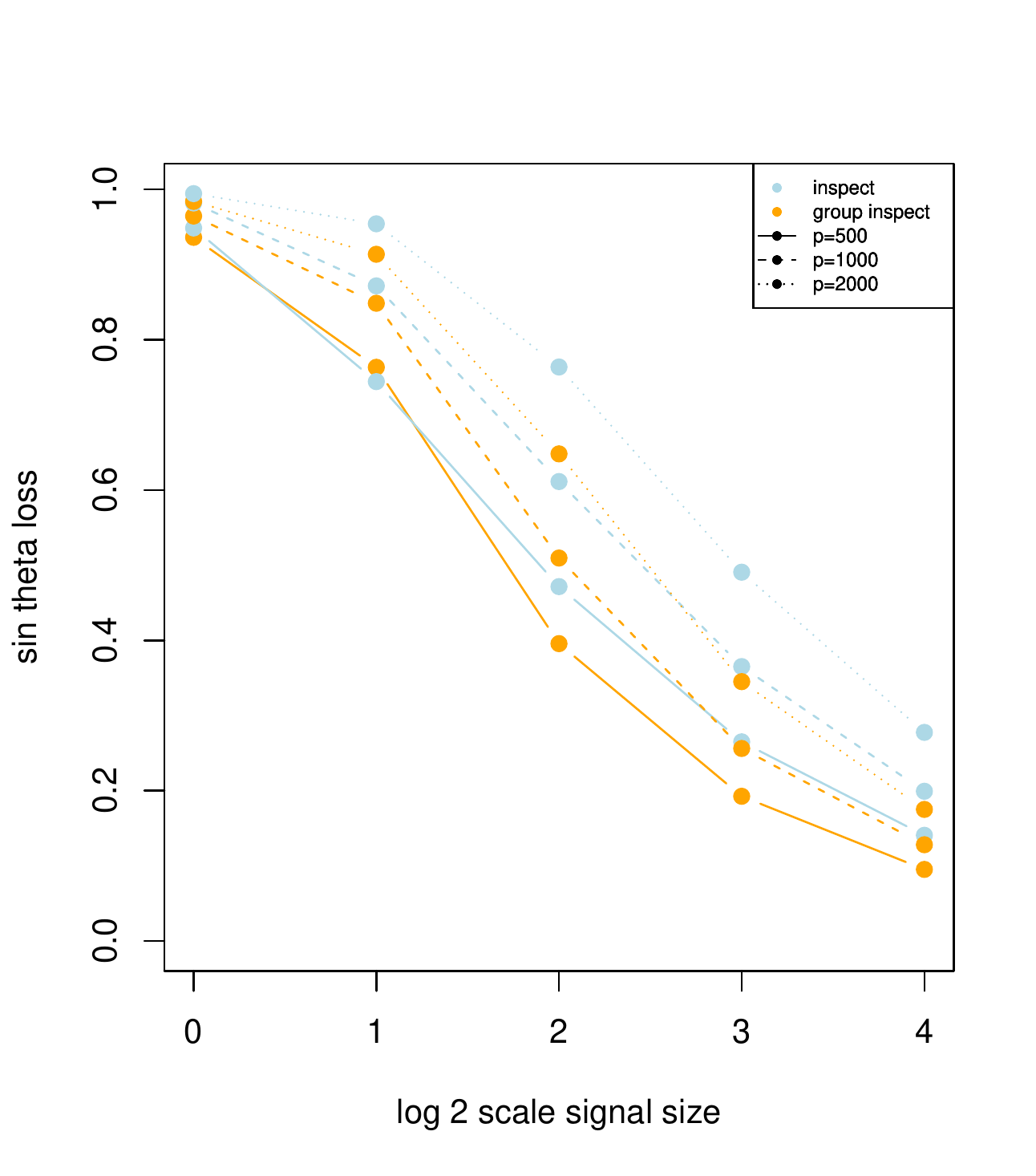}
\end{tabular}
\caption{\label{fig:loss comparison}Average loss (over 100 repetitions) comparison between \texttt{groupInspect} and \texttt{Inspect}. Left panel: non-overlap setting. Right panel: overlap setting}
\end{figure}

\begin{table}[htbp]
\begin{center}
\begin{tabular}{ccccccccc}
\hline\hline
$p$ & $\vartheta$  & \texttt{groupInspect} & \texttt{inspect} & $\ell_2$-aggregate & $\ell_\infty$-aggregate & double cusum\\
\hline
 $500$ & $0.25$ &  $\mathbf{127}$ & $143$ & $336$ & $337$ & $347$\\
$500$ & $0.5$ & $\mathbf{59.8}$ & $93.4$ & $231$ & $305$ & $262$\\
$500$ & $1$ &  $\mathbf{3.83}$ & $8.83$ & $9.84$ & $94.2$ & $40.6$\\
$500$ & $2$  & $\mathbf{0.670}$ & $0.982$ & $0.875$ & $16.0$ & $4.16$\\
$500$ & $4$ & $0.045$ & $\mathbf{0.018}$ & $0.045$ & $4.04$ & $0.179$\\
$1000$ & $0.25$ & $\mathbf{108}$ & $138$ & $347$ & $348$ & $363$\\
 $1000$ & $0.5$  & $\mathbf{81.8}$ & $107$ & $269$ & $326$ & $297$\\
 $1000$ & $1$ & $\mathbf{15.6}$ & $34.6$ & $22.1$ & $204$ & $57.9$\\
 $1000$ & $2$  & $\mathbf{0.920}$ & $1.51$ & $0.973$ & $28.3$ & $3.91$\\
$1000$ & $4$ & $\mathbf{0.081}$ & $0.117$ & $0.099$ & $6.70$ & $0.387$\\
 $2000$ & $0.25$ & $\mathbf{101}$ & $139$ & $358$ & $365$ & $364$\\
 $2000$ & $0.5$  & $\mathbf{91.2}$ & $127$ & $305$ & $353$ & $321$\\
 $2000$ & $1$ & $\mathbf{36.3}$ & $58.1$ & $71.6$ & $305$ & $127$\\
 $2000$ & $2$  & $\mathbf{1.88}$ & $2.76$ & $2.32$ & $52.6$ & $6.27$\\
 $2000$ & $4$  & $\mathbf{0.134}$ & $0.161$ & $\mathbf{0.134}$ & $7.97$ & ${0.696}$\\
\hline\hline
\end{tabular}
\caption{\label{table:compare}Average mean absolute deviation (over 100 repetitions) comparison between different methods. Other parameters used: $n=1000$ with $G=10$}
\end{center}
\end{table}

\subsection{Multiple change-points simulation}
\label{Sec:multi_sim}
The numerical studies so far have focused mainly on the single change-point estimation problem. In this subsection, we investigate the empirical performance of \texttt{groupInspect} in multiple change-point estimation tasks. We will compare its performance as implemented in Algorithm~\ref{alg:multiple change} to that of the \texttt{inspect} algorithms for estimating multiple change-points under different settings. We choose $n=1200$, $p\in\{500,1000\}$, $s\in\{3,10\}$, $G\in\{50,100\}$. Each data series contains three true change-points located at $300$, $600$ and $900$ with the $\ell_2$ norm of the change equal to  $\vartheta$, $1.5\vartheta$ and  $2\vartheta$ respectively. We vary  $\vartheta$ in $\{0.6,0.8,1,1.2,1.4\}$. For simplicity, we further assume that the same $s$ coordinates undergo change in all three change-points and that all groups have 10 elements.  We use the $\lambda$ tuning parameter choice suggested in Section~\ref{Sec:Tuning} for the \texttt{groupInspect} method and that suggested in \citet{WS2018} for the \texttt{inspect} algorithm. For the thresholding parameter $\xi$ of the wild binary segmentation recursion used in both \texttt{groupInspect} and \texttt{inspect}, we choose via Monte Carlo simulation. More precisely, we randomly generate 1000 data sets from the null model with no change-points and take the maximum absolute CUSUM statistics from Algorithm~\ref{alg:multiple change} and  \citet[Algorithm 4]{WS2018} as $\xi_g$ and $\xi_i$ respectively. We compare the performance of two algorithms using the Adjusted Rand index (ARI) of the estimated segmentation against the truth \citep{rand1971, HA1985}. 

From Figure~\ref{fig:ari}, we see that the \texttt{groupInspect} algorithm generally performs much better than the \texttt{inspect} algorithm in the multiple change-point localisation tasks. The advantage of \texttt{groupInspect} is more pronounced when the signal is sparser and when the dimension of the data is higher. 

\begin{figure}
\centering
\begin{tabular}{cc}
\includegraphics[width=0.45\textwidth]{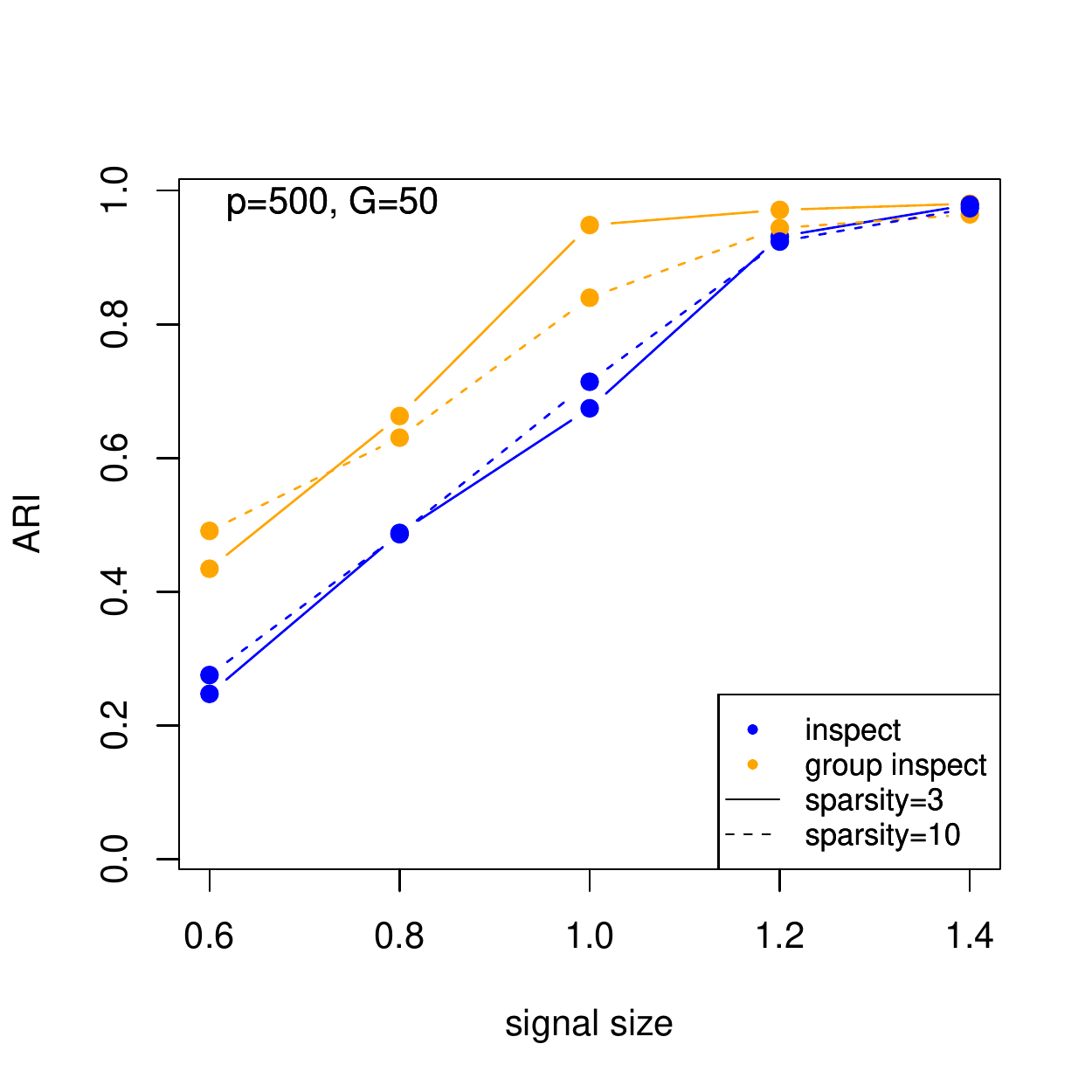} &
\includegraphics[width=0.45\textwidth]{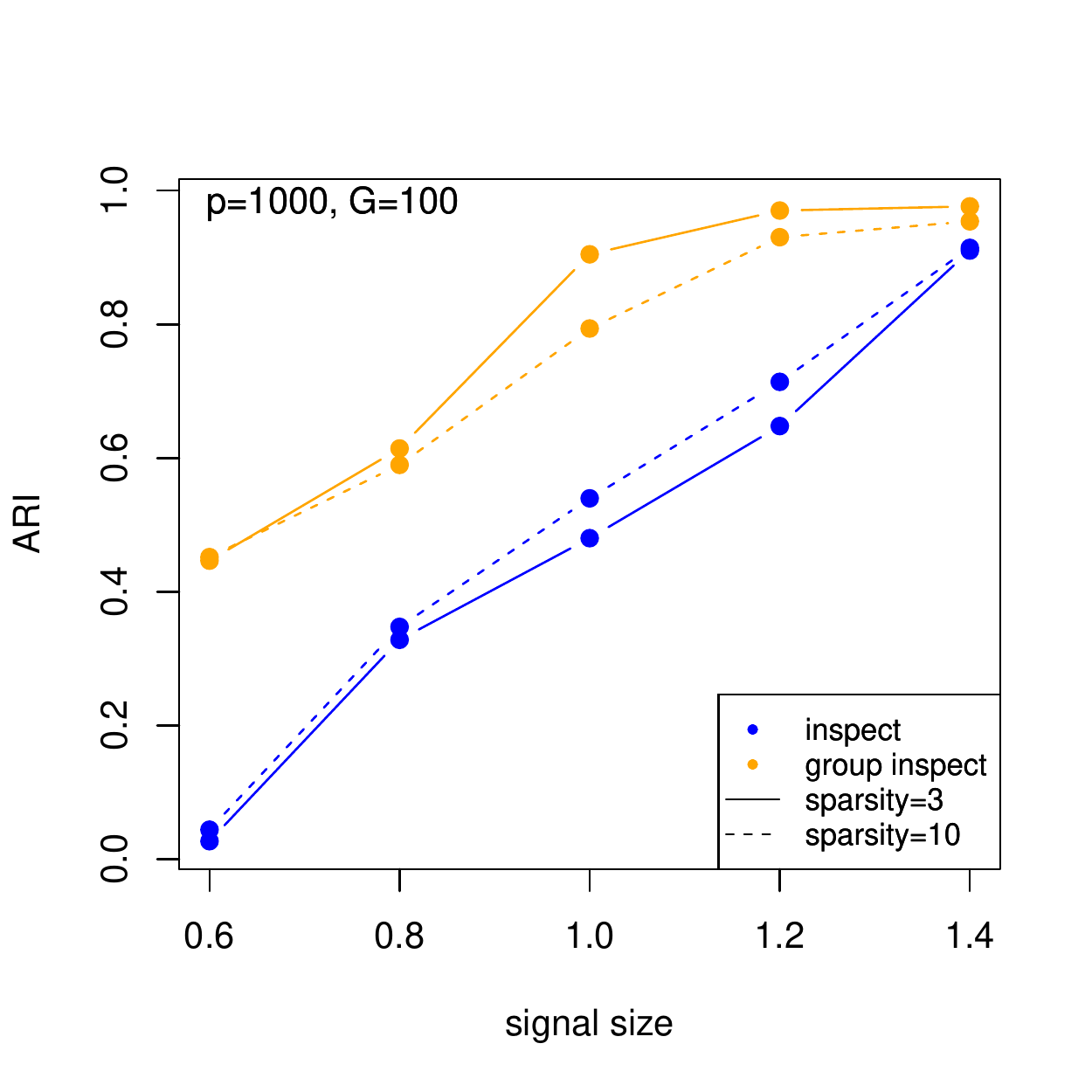}
\end{tabular}
\caption{\label{fig:ari}Average ARI comparsion between \texttt{groupInspect} and \texttt{inspect}. Left panel: $p=500,G=50$. Right panel: $p=1000,G=100$.}
\end{figure}

To further visualise the output of the two procedures, we plot the estimated change-point locations for one specific setting ($s=3$ and $\vartheta=1$) of each of the two panels in Figure~\ref{fig:ari}. The resulting histograms in Figure~\ref{fig:hist} shows that when $p=500$, \texttt{groupInspect} was better at picking out all three change-points with higher accuracies. When $p=1000$, \texttt{inspect} was only able to pick out the change at $t=600$ in most of the trials, whereas \texttt{groupInspect} was still able to identify even the weakest change signal at $t=300$ in a substantial fraction of all trials.

\begin{figure}[htbp]
\centering
\begin{tabular}{cc}
\includegraphics[width=0.45\textwidth]{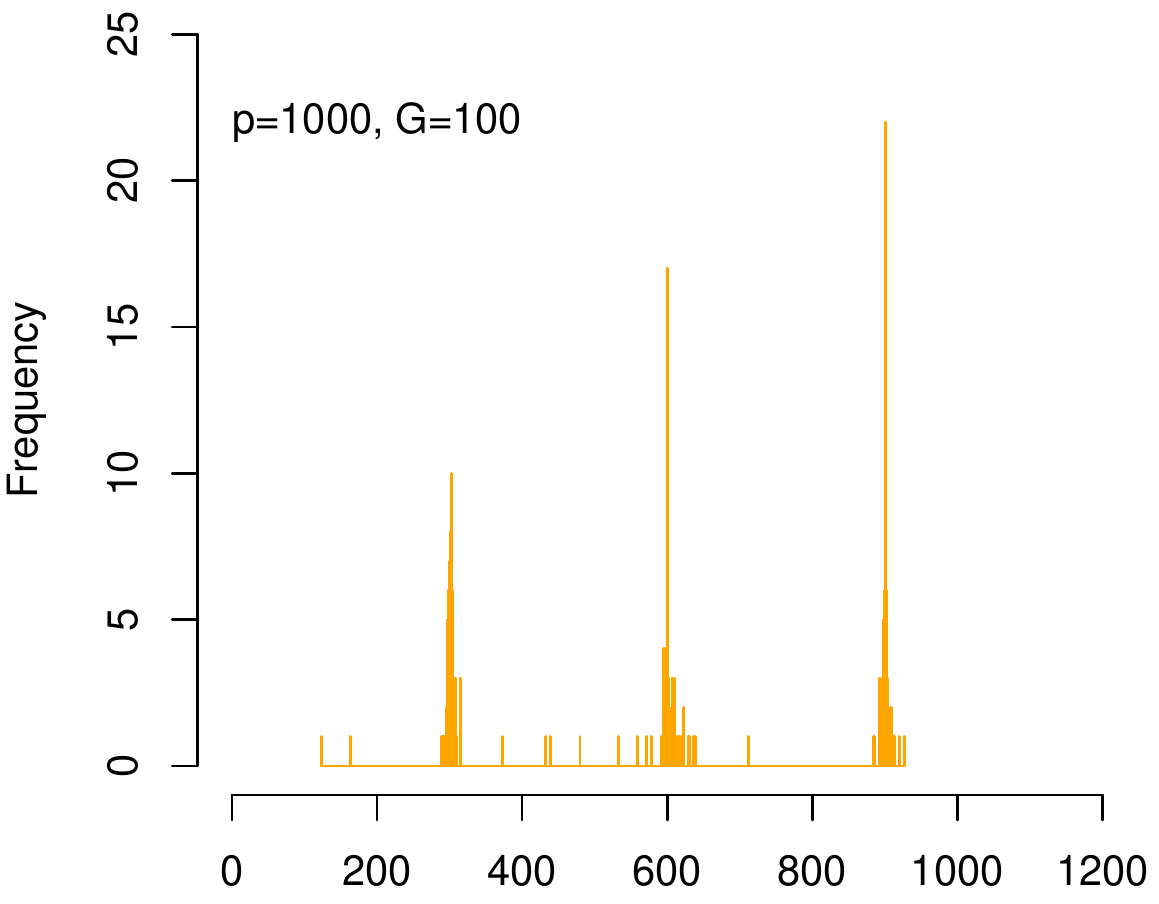} &
\includegraphics[width=0.45\textwidth]{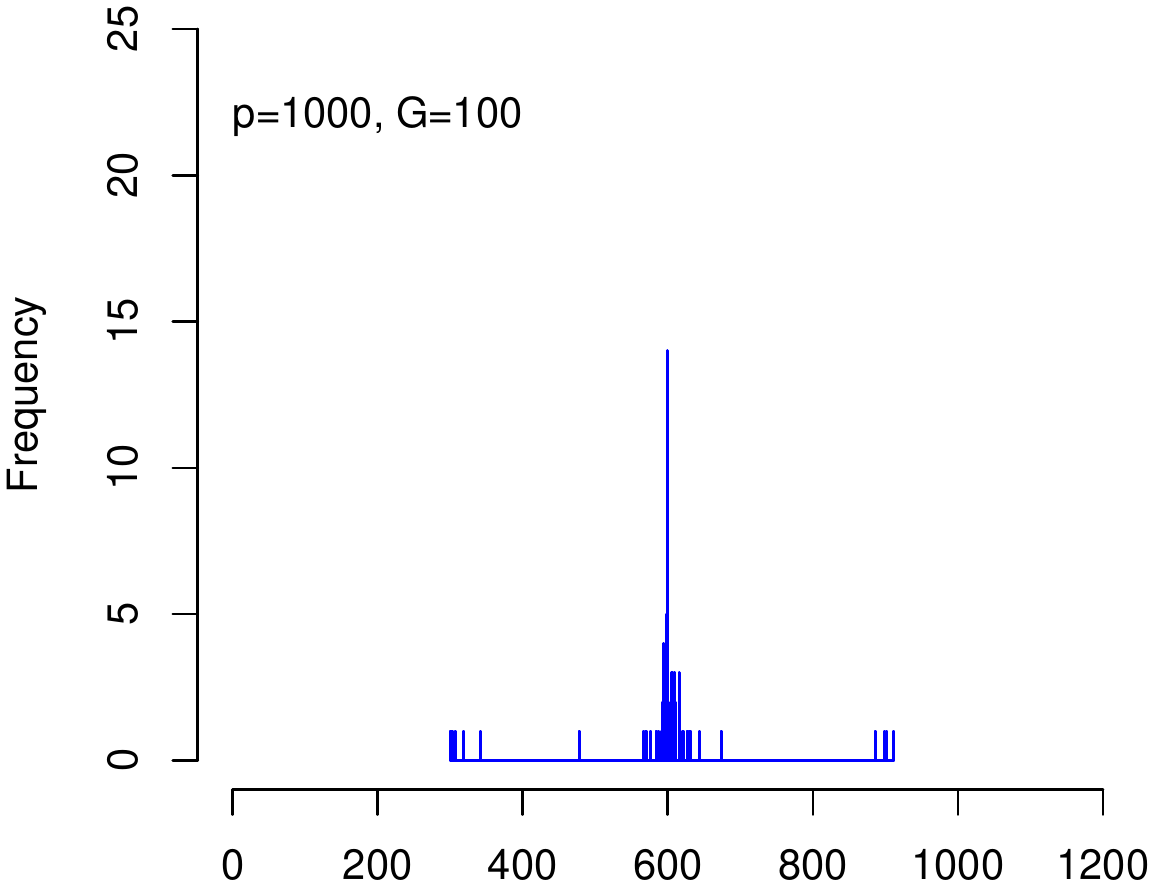}\\
\includegraphics[width=0.45\textwidth]{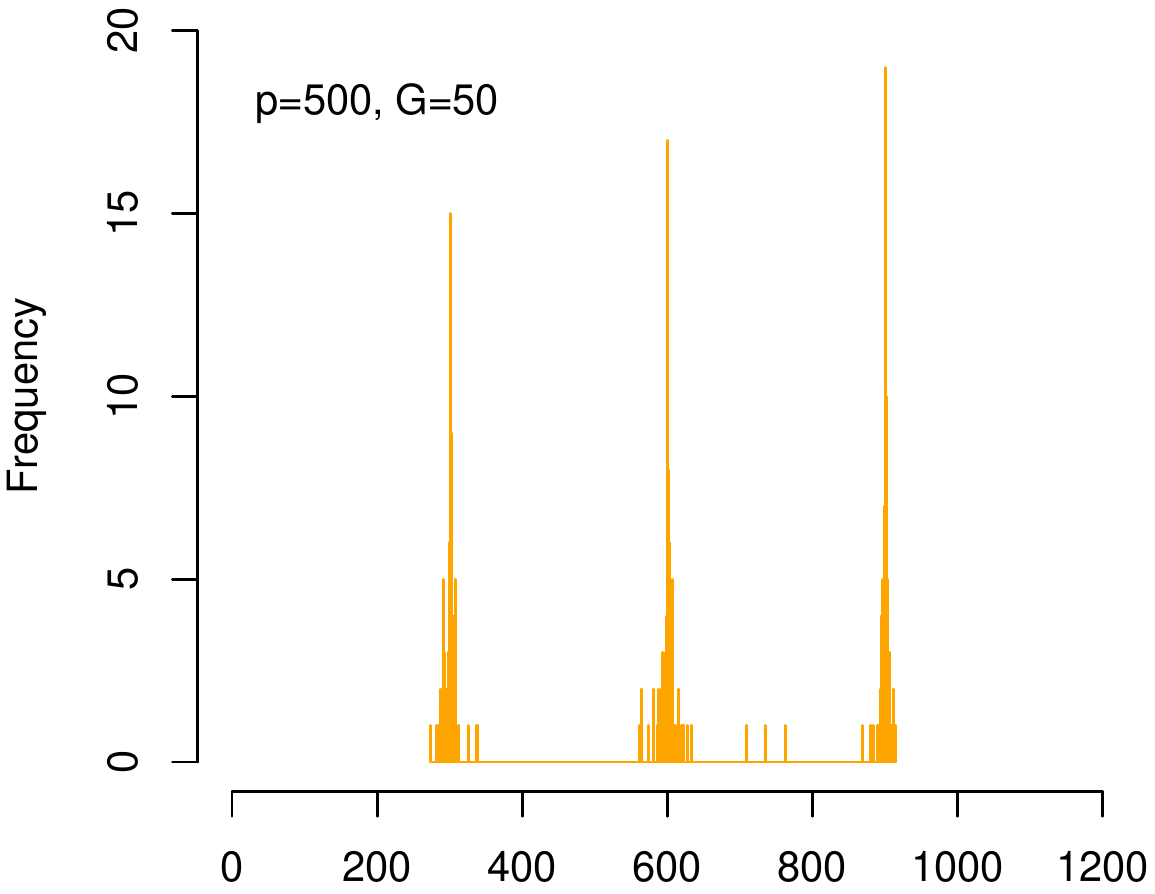} &
\includegraphics[width=0.45\textwidth]{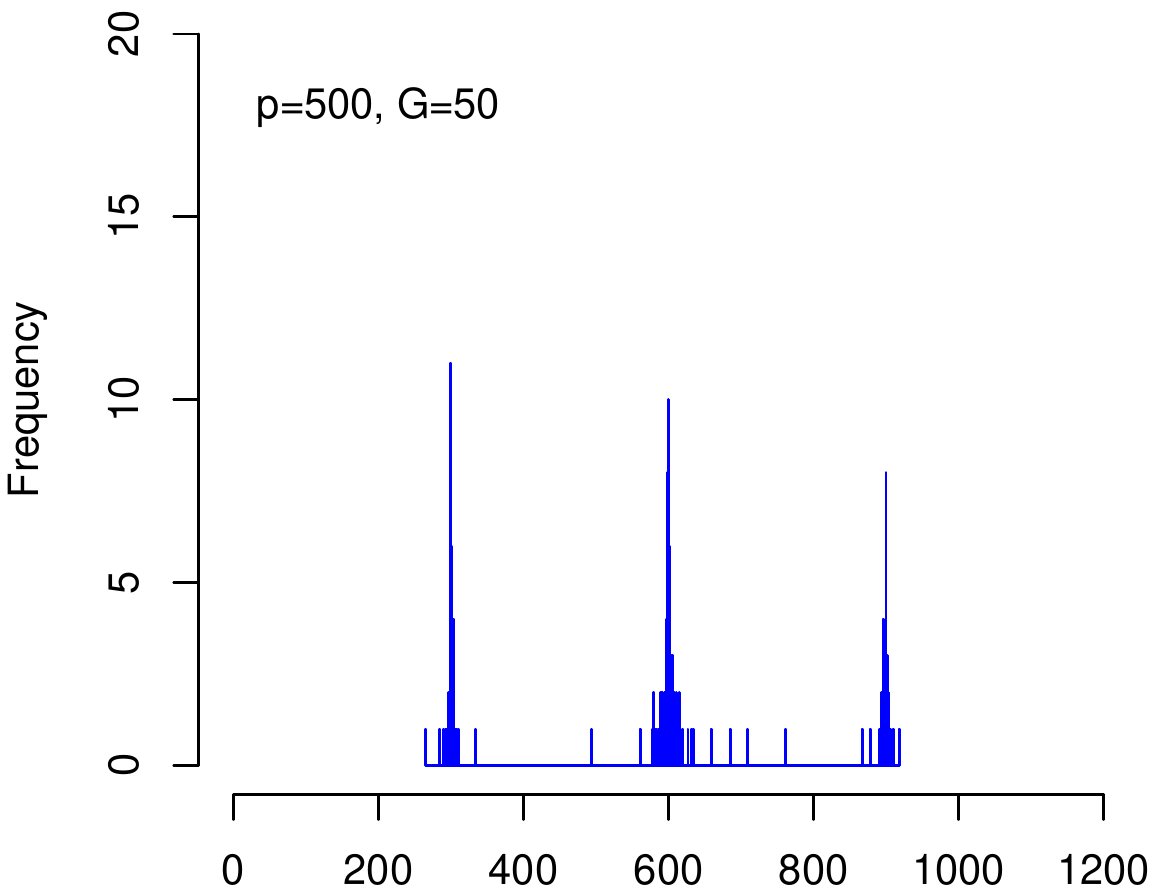}
\end{tabular}
\caption{\label{fig:hist}Histograms of estimated locations by \texttt{groupInspect} and \texttt{inspect} under two settings when $P=500,G=50$ and $p=1000, G=100$. Other parameter used: $s=3$, $\vartheta=1$ are fixed in both settings.  }
\end{figure}

\subsection{Real data analysis}
\label{sec:real}
In this section, we apply \texttt{groupInspect}  to a stock price data. The data consists of the logarithmic daily returns (computed from the adjusted closing prices) of S\&P 500 stocks during the period 1 January 2007 to 31 December 2011. Since not all companies remained in the S\&P 500 list and some companies have missing data at a few time points, we eventually selected 256 companies which have continuously traded throughout this this period to construct a multivariate time series of dimension $p=256$ and length $n=1259$.  We then divide the $256$ companies into $G=11$ non-overlapping groups according to respective Global Industry Classification Standard  sector memberships. We then rescale rows of the data matrix by their estimated standard deviation as in Section~\ref{Sec:Simulations}. We use the same procedure in Section~\ref{Sec:multi_sim} to choose thresholding parameter $\xi$. 

The \texttt{groupInspect} algorithm identifies the following change points $t=147$, $148$, $298$, $386$, $427$, $441$, $448$, $460$, $477$, $522$, $524$, $549$, $559$, $1158$, $1189$, as illustrated in Figure~\ref{fig:real}. We see a large number of changes being identified in the period between September and October 2008, which corresponds to the period when the financial crisis reaches a climax, and when the stock market is most volatile. 

 \begin{figure}[htbp]
\centering
\includegraphics[width=0.9\textwidth]{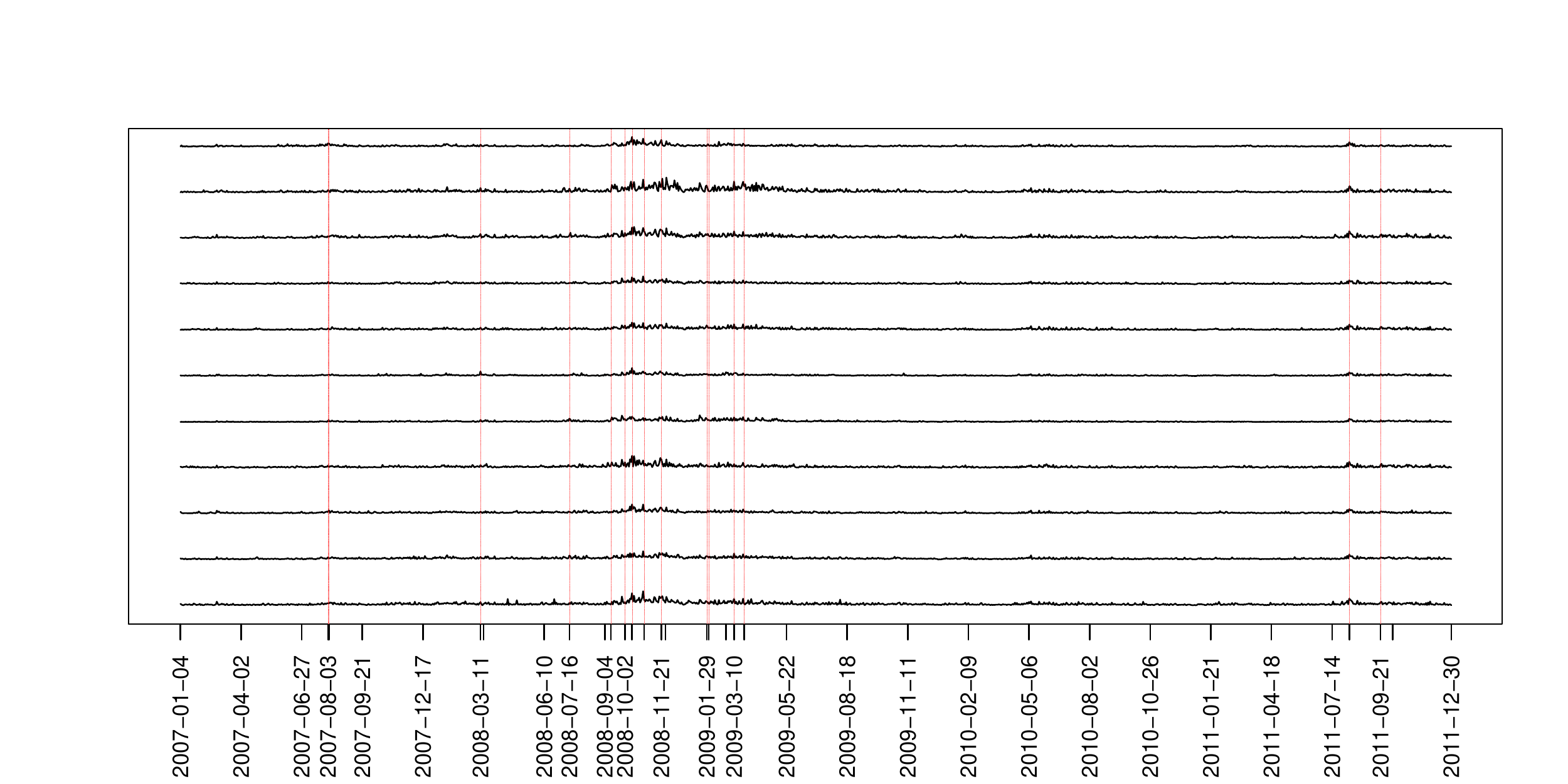} 
\caption{\label{fig:real}Estimated change point locations (red dashed lines) by \texttt{groupInspect} applied to the stock return data. For ease of illustration, we have plotted the $\ell_2$ norm of the returns of all stocks within each of the 11 groups over time.}
\end{figure}

\section{Proofs of main results}
\label{Sec:Proofs}
In this section, we will give the proof of our results in section~\ref{Sec:Theory}.
\subsection{Proof of Theorem~\ref{thm:upper bound}}
\begin{proof}
By \citet[equation (9)]{WS2018}, we can explicitly write the matrix $A = (A_{j,t})_{j\in[p], t\in[n-1]}$ by
\[
A_{j,t}=
\begin{cases}
  \sqrt{\frac{t}{n(n-t)}}(n-z)\theta_j& \text{if  $1\leq t\leq z$}, \\ \sqrt{\frac{n-t}{nt}}z\theta_j &\text{ if $z<t\leq n-1$}.
  \end{cases}
\]
Thus, we have 
\[
A=\theta \gamma^\top,
\]
where $\gamma =\frac{1}{\sqrt{n}}\bigr(\sqrt{\frac{1}{n-1}}(n-z),\sqrt{\frac{2}{n-2}}(n-z),\cdots \sqrt{z(n-z)},\sqrt{\frac{n-z-1}{z+1}}z,\cdots,\sqrt{\frac{1}{n-1}}z\bigr)^\top$. In particular, by By \citet[Lemma~3]{WS2018}, $A$ is a rank 1 matrix with $\|A\|_{\mathrm{op}} = \|\theta\|_2\|\gamma\|_2 \geq n\tau\vartheta/4$. By Proposition~\ref{Prop:SmallEvent} with $\delta=(nG)^{-2}$, we have 
\[
\mathbb{P}(\|T-A\|_{\mathrm{grp}*}>\lambda)<\frac{1}{Gn}.
\]
By Proposition~\ref{Prop:Deterministic}, on the event $\{\|T-A\|_{\mathrm{grp}_*}\leq \lambda\}$, we have
\[
\max\bigl\{\sin\angle(v,\hat{v}), \sin\angle(u,\hat{u})\bigr\}\leq\frac{32\lambda (C_1 k)^{1/2}}{n^{1/2}\tau\vartheta},
\]
as desired.
\end{proof}

\subsection{Proof of Theorem~\ref{thm:lower bound}}

\begin{proof}
We will use two different constructions to derive separate lower bounds of order $\sqrt{\sigma^2 s\log(G/s)/(n\tau\vartheta^2)}$  and  $\sqrt{\sigma^2 k/(n\tau\vartheta^2)}$ respectively. Without loss of generality, we may assume that $z<n/2$. 

For the first bound, let $s_0=s-1$, $G_0=G-1$, then in $\mathbb{R}^G$. By the Gilbert--Varshamov lemma as stated in \citet[Lemma~4.10]{Massart2007} (applied with $\alpha=3/4$ and $\beta=1/3$), we can construct a set $\mathcal{U}_0$ of $s_0$-sparse vectors in $\{0,1\}^{G_0}$, with cardinality at least $({G_0}/{s_0})^{s_0/5}$, such that the pairwise Hamming distance between any pair of vectors in $\mathcal{U}_0$ is at least ${s_0}/{2}$. Let $\epsilon \in (0,1)$ to be chosen later, we can define a set 
\[\mathcal{U} =\bigg\{\biggl(
\begin{array}{c}
\sqrt{1-\epsilon^2}\\
s_0^{-1/2}\epsilon u_0\\
\end{array}
\bigg): u_0\in \mathcal{U}_0\bigg\}\subseteq \mathbb{S}^{G-1}.
\]
We remark that for any pair of distinct $u, u'\in\mathcal{U}$, we have by construction that $\epsilon/\sqrt{2}\leq \|u'-u\|_2 \leq \epsilon$. 
We then define a map $\psi:\mathbb{R}^G\to\mathbb{R}^p$ such that for any $u\in\mathcal{U}$ and $j\in\mathcal{J}_g$, we have $\psi(u)_j = u_g p_g^{-1/2}$. Finally, let $\mathcal{V}=\{\psi(u):u\in\mathcal{U}\}$. We note that $\|\psi(u') - \psi(u)\|_2 = \|u'-u\|_2$. Therefore, for distinct $v,v'\in \mathcal{V}$, we have
\begin{equation}
\label{Eq:Fano1}
L(v',v)=\sqrt{1-(v^\top v')^2}=\frac{\|v'-v\|_2}{\sqrt 2}\geq\frac{\epsilon}{2}.
\end{equation}
Now, for each $v\in\mathcal{V}$, we can define a distribution $P_v \in \mathcal{P}_{n,p}(s, k, \tau, \vartheta,\sigma^2, (\mathcal{J}_g)_{g\in[G]})$, such that the pre-change mean is $-\vartheta v$ and the post-change mean is $0$ (we check that $P_v$ indeed satisfies the conditions of $ \mathcal{P}_{n,p}(s, k, \tau, \vartheta,\sigma^2, (\mathcal{J}_g)_{g\in[G]})$). Then for any distinct $v,v'\in\mathcal{V}$, we have 
\begin{equation}
\label{Eq:Fano2}
D(P_v \|P_{v'}) = z D(N_p(-v\vartheta,\sigma^2 I_p)\|N_p(-v'\vartheta,\sigma^2 I_p))= \frac{z\vartheta^2}{2}\|v- v'\|_2^2 \leq \frac{z\vartheta^2\epsilon^2}{2\sigma^2}.
\end{equation}
By~\eqref{Eq:Fano1} and~\eqref{Eq:Fano2}, we can apply Fano's lemma \citep[Lemma 3]{Yu1997} to obtain that
\begin{align*}
\inf_{\tilde v} \sup_{P\in\mathcal{P}_{n,p}(s,k,\tau,\vartheta,\sigma^2, (\mathcal{J}_g)_{g\in[G]})} \mathbb{E}_{P}L(\tilde v(X), v(P))&\geq
\inf_{\tilde v} \sup_{v\in\mathcal{V}} \mathbb{E}_{P_v}L(\tilde v(X), v)\\
&\geq
\frac{\epsilon}{4}\biggl\{1-\frac{z\vartheta^2 \epsilon^2/2\sigma^2+\log2}{(s_0/5)\log(G_0/s_0)}\biggr\}.
\end{align*}
By the condition $(s-1)\log(G/s)\geq 20$ in the theorem, we have $(s_0/5)\log(G_0/s_0) \geq 2\log 2$. Moreover, the choice of 
\[
\epsilon=\sqrt{\frac{\sigma^2 s_0\log (G_0/s_0)}{10z\vartheta^2}}
\]
ensures that $(s_0/5)\log(G_0/s_0)\geq 2z\vartheta^2\epsilon^2/\sigma^2$. Therefore,
\begin{equation}
\label{Eq:FirstLowerBound}
\inf_{\tilde v} \sup_{P\in\mathcal{P}_{n,p}(s,k,\tau,\vartheta,\sigma^2, (\mathcal{J}_g)_{g\in[G]})} \mathbb{E}_{P}L(\tilde v(X), v(P))\geq \frac{\epsilon}{16} \geq \frac{1}{72}\sqrt{\frac{\sigma^2 s\log (G/s)}{z\vartheta^2}}.
\end{equation}

For the second lower bound, let $g_1,\ldots,g_s$ be the indices of the $s$ groups with largest cardinalities. By the given condition of the Theorem, we have that  $\tilde k = \sum_{r=1}^{s} p_{g_r} = \sum_{r=1}^{s} p_{(G-r+1)} \geq k/2$.  Let $S=\cup_{r=1}^s \mathcal{J}_{g_r}$, so $|S| = \tilde k$. By \citet[Lemma 4.7]{Massart2007}, we can construct a subset $\mathcal{V}_0$ of $\{-1,1\}^{\tilde k_0}$ of cardinality at least $e^{\tilde k/8}$, such that any two points in the set are separated in Hamming distance by at least $\tilde k/4$. Construct
\[
\mathcal{V} = \biggl\{v: v_S = \begin{pmatrix}\sqrt{1-\epsilon^2}\\ \tilde k_0^{-1/2}\epsilon v_0\end{pmatrix} \text{ for some $v_0\in\mathcal{V}_0$ and $v_{S^\mathrm{c}} = 0$}\biggr\}.
\]
Therefore, for distinct $v,v'\in \mathcal{V}$, we have $\epsilon\leq\|v'-v\|_2\leq 2\epsilon$,then,
\[
L(v',v)=\sqrt{1-(v^\top v')^2}=\frac{\|v'-v\|_2}{\sqrt 2}\geq\frac{\epsilon}{\sqrt{2}}.
\]
Following the same derivation as in~\eqref{Eq:Fano2}, we have that
\[
D(P_v \|P_{v'}) = z D(N_p(-v\vartheta,\sigma^2 I_p)\|N_p(-v'\vartheta,\sigma^2 I_p))= \frac{z\vartheta^2}{2\sigma^2}\|v- v'\|_2^2 \leq 2z\vartheta^2\epsilon^2/\sigma^2.
\]
Again, we can use Fano's lemma \citep[Lemma 3]{Yu1997} to obtain that
\begin{align*}
\inf_{\tilde v} \sup_{v\in\mathcal{V}} \mathbb{E}_{P_v}L(\tilde v(X), v)\geq
\frac{\epsilon}{\sqrt{2}}\biggl\{1-\frac{2z\vartheta^2\epsilon^2/\sigma^2 +\log2}{\tilde{k}/8}\biggr\} \geq \frac{\epsilon}{\sqrt{2}}\biggl\{1-\frac{2z\vartheta^2\epsilon^2/\sigma^2 +\log2}{k/16}\biggr\}.
\end{align*}
Now, choose $\epsilon= \sigma k^{1/2}z^{-1/2}\vartheta^{-1}/4\sqrt{6}$. Since $k\geq 20$, we have $k/16\geq 9\log (2)/5$, so that
\begin{align}
\label{Eq:SecondLowerBound}
\inf_{\tilde v} \sup_{P\in\mathcal{P}_{n,p}(s,k,\tau,\vartheta,\sigma^2, (\mathcal{J}_g)_{g\in[G]})} \mathbb{E}_{P}L(\tilde v(X), v(P)) &\geq
\inf_{\tilde v} \sup_{v\in\mathcal{V}} \mathbb{E}_{P_v}L(\tilde v(X), v)\nonumber\\
&\geq\frac{\epsilon}{9\sqrt{2}}\geq\frac{1}{72\sqrt{3}}\sqrt{\frac{\sigma^2 k}{z\theta^2}}.
\end{align}
The desired result follows by combining~\eqref{Eq:FirstLowerBound} with~\eqref{Eq:SecondLowerBound}, and noting that $z\geq n\tau$. 
 \end{proof}
 
\subsection{Proof of Theorem~\ref{Thm:change-pointEstimate}}
\begin{proof}
Recall the definition of $X^{(2)}$ and let $T^{(2)}=\mathcal{T}(X^{(2)})$. Define similarly $\mathbf{\mu}^{(2)}=(\mu_1^{(2)},\ldots,\mu_{n_1}^{(2)})\in\mathbb{R}^{p\times n_1}$ and a random $W^{(2)}=(W_1^{(2)},\ldots,W_{n_1}^{(2)})$ taking values in $\mathbb{R}^{p\times n_1}$ by $\mu_t^{(2)}=\mu_{2t}$ and $W_t^{(2)}=W_{2t}$. Now, let $A^{(2)}=\mathcal{T}(\mathbf{\mu}^{(2)})$ and $E^{(2)}=\mathcal{T}(W^{(2)})$. We also write $\bar{X}=(\hat{v}^{(1)})^\top X^{(2)}$,$\bar{\mu}=(\hat{v}^{(1)})^\top\mathbf{\mu}^{(2)}$,$\bar{W}=(\hat{v}^{(1)})^\top W^{(2)}$,$\bar{A}=(\hat{v}^{(1)})^\top A^{(2)}$, $\bar{E}=(\hat{v}^{(1)})^\top E^{(2)}$ and $\bar T=(\hat{v}^{(1)})^\top T^{(2)}$ for the one-dimensional projected images. Note that by linearity, we have $\bar{T}=\mathcal{T}(\bar{X})$, $\bar A=\mathcal{T}(\bar{\mu})$ and $\bar E=\mathcal{T}(\bar{W})$,

Now, conditional on $\hat{v}^{(1)}$, the random variables $\bar{X}_1,\ldots,\bar{X}_{n_1}$ are independent with 
\[
\bar{X}_t \mid \hat{v}^{(1)}\sim N(\bar{\mu}_t,\sigma^2)
\]
and the row vector $\bar{\mu}$ undergoes a single change at $z^{(2)} =z/2$ with magnitude of change 
\[
\bar{\theta}=\bar{\mu}_{z^{(2)}+1}-\bar{\mu}_{z^{(2)}}=\hat{v}^{(1)\top}\theta.
\]
Finally, let $\hat{z}^{(2)}\in\argmax_{1\leq t\leq n_1-1}|\bar{T_t}|$, so the first component of the output of the algorithm is $\hat{z}=2\hat{z}^{(2)}$. Consider the set 
\[
\Upsilon=\{u \in\mathbb{S}^{p-1}:\angle(u,v)\leq \pi/6\}.
\]
By Condition \eqref{eq:cond 1} and Theorem~\ref{thm:upper bound}, we have that
\begin{equation}
\label{Eq:Prob1}
\mathbb{P}(\hat{v}^{(1)} \in \Upsilon)\geq 1-\frac{1}{n_1 G}.
\end{equation}
Moreover, on the event $\{\hat v^{(1)} \in \Upsilon\}$, we have that $\bar{\theta}\geq \sqrt{3}\vartheta/2$. Set $\lambda_1=\sigma(1+\sqrt{4 \log n})$, we have by Proposition~\ref{Prop:SmallEvent} that 
\begin{equation}
\label{Eq:Prob2}
    \mathbb{P}(\|\bar{E}\|_{\infty}\geq\lambda_1)=\mathbb{P}(\|\bar{E}\|_{\mathrm{grp}*}\geq \lambda)\leq \frac{1}{n_1}.
\end{equation}
Since $\bar{T}=\bar{A}+\bar{E}$ and $(\bar{A_t})_t$ and  $(\bar{T_t})_t$ are respectively maximized at $t=z^{(2)}$ and $t=\hat{z}^{(2)}$, we have on the event $\Omega_0=\{\hat{v}_1\in \Upsilon,\|\bar{E}\|_{\infty}\geq\lambda_1\}$ that
\begin{align*}
\bar{A}_{z^{(2)}}-\bar{A}_{\hat{z}^{(2)}}&=(\bar{A}_{z^{(2)}}-\bar{T}_{\hat{z}^{(2)}})+(\bar{T}_{z^{(2)}}-\bar{T}_{\hat{z}^{(2)}})+(\bar{T}_{\hat{z}^{(2)}}-\bar{A}_{\hat{z}^{(2)}})\\
&\leq |\bar{A}_{z^{(2)}}-\bar{A}_{\hat{z}^{(2)}}|+|\bar{T}_{\hat{z}^{(2)}}-\bar{A}_{\hat{z}^{(2)}}|\leq 2\lambda_1.
\end{align*}
Hence, by \citet[Lemma 7 in the online supplement]{WS2018}, on the event $\Omega_0$, we have that 
\begin{equation*}
\label{eq:cond 3}
    \frac{|\hat{z}^{(2)}-z^{(2)}|}{n_1\tau}\leq\frac{3\sqrt{6}\lambda_1}{\bar{\theta}(n_1\tau)^{1/2}}\leq\frac{6\sqrt{2}\sigma(1+\sqrt{4\log n_1})}{\vartheta\sqrt{n\tau}}.
\end{equation*}
Now we define the event
\[
\Omega_1=\biggl\{\biggl|\sum_{r=1}^s\bar{W_t}-\sum_{r=1}^t\bar{W}\biggr|\leq\lambda_1 \sqrt{|s-t|},\quad for\text{ }all\text{ } 0\leq t\leq n_1, s\in\{0,z^{(2)},n_1\}\biggr\}.
\]
By \citet[Lemma~5]{WS2018}, we have that 
\begin{equation}
\label{Eq:Prob3}
\mathbb{P}(\Omega_1^{\mathrm{c}}) \leq 4e^{-\lambda_1^2/4}\{2\log n_1+\log z^{(2)} + \log(n_1-z^{(2)}\} \leq 16\log n e^{-\lambda_1^2/4} \leq \frac{16\log n}{n}.
\end{equation}
Following the proof of Theorem~1 of \citet{WS2018}, we have on $\Omega_0\cap\Omega_1$ that 
\[
 1 \leq\frac{6\sqrt{3}\lambda_1}{\bar{\theta}|\hat{z}^{(2)}-z^{(2)}|^{1/2}}+\frac{12\sqrt{6}\lambda_1}{\bar{\theta}(n_1\tau)^{1/2}}\\
 \leq\frac{12\sqrt{2}\sigma(1+\sqrt{4\log n})}{\vartheta \sqrt{|z-\hat{z}}|}+\frac{48\sigma(1+\sqrt{4\log n})}{\vartheta\sqrt{n\tau}}
\]
From \eqref{eq:cond 1} for $C\geq 96$, we have on $\Omega_0\cap\Omega_1$ that
\[
|\hat{z}-z|\leq C'\vartheta^{-2}\sigma^2(1+\sqrt{4\log n}).
\]
Finally, by~\eqref{Eq:Prob1},~\eqref{Eq:Prob2} and~\eqref{Eq:Prob3} we have that
\[
\mathbb{P}(\Omega_0\cap\Omega_1) \geq 1 - \frac{1}{n_1G} - \frac{1}{n_1} - \frac{16\log n}{n} \geq 1 - \frac{20\log n}{n},
\]
as desired.
\end{proof}

\section{Ancillary results}
\label{Sec:Ancillary}
We collect in this section all ancillary propositions and lemmas used in the paper. For all results in this section, we assume that we are given a grouping $(\mathcal{J}_g)_{g\in [G]}$ of $[p]$ and the associated group norm $\|\cdot\|_{\mathrm{grp}}$. It is useful to define the following counterpart to the group norm. For any $R \in \mathbb{R}^{p\times n}$ and a grouping $(\mathcal{J}_g)_{g\in[G]}$ of $[p]$, we define 
\begin{equation}
\label{Eq:GrpDualNorm}
	\left\lVert R \right\rVert_{\mathrm{grp*}}=p_g^{-1/2}\max_{g\in[G]}\max_{t\in[n]} \|R_{\mathcal{J}_g,t}\|_2.
\end{equation}
\begin{lemma}
\label{Lemma:DualNorm}
The norm $\|\cdot\|_{\mathrm{grp*}}$ is a dual to $\|\cdot\|_{\mathrm{grp}}$ with respect to the inner product $\langle \cdot,\cdot\rangle$ on $\mathbb{R}^{p\times n}$.
\end{lemma}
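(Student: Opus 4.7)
The plan is to verify the defining identity of a dual norm, namely that
\[
\|R\|_{\mathrm{grp*}} \;=\; \sup\bigl\{\langle R, M\rangle : M\in\mathbb{R}^{p\times n},\ \|M\|_{\mathrm{grp}}\leq 1\bigr\},
\]
by establishing the two matching inequalities separately.

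First I would prove the Hölder-type upper bound $\langle R, M\rangle \leq \|R\|_{\mathrm{grp*}} \|M\|_{\mathrm{grp}}$. Assuming the groups partition $[p]$, the inner product decomposes cleanly as $\langle R, M\rangle = \sum_{g=1}^G \sum_{t=1}^n \langle R_{\mathcal{J}_g,t}, M_{\mathcal{J}_g,t}\rangle$. I would then apply Cauchy--Schwarz within each column block to get $|\langle R_{\mathcal{J}_g,t}, M_{\mathcal{J}_g,t}\rangle| \leq \|R_{\mathcal{J}_g,t}\|_2 \|M_{\mathcal{J}_g,t}\|_2$, and insert a $p_g^{1/2} p_g^{-1/2}$ factor so that $\|R_{\mathcal{J}_g,t}\|_2 = p_g^{1/2}\bigl(p_g^{-1/2}\|R_{\mathcal{J}_g,t}\|_2\bigr)\leq p_g^{1/2}\|R\|_{\mathrm{grp*}}$. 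Summing over $g$ and $t$ produces $\|R\|_{\mathrm{grp*}}\sum_g p_g^{1/2}\sum_t\|M_{\mathcal{J}_g,t}\|_2 = \|R\|_{\mathrm{grp*}}\|M\|_{\mathrm{grp}}$, as required.

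For the reverse direction I would construct an explicit matrix that saturates the bound. Let $(g^*,t^*)$ attain the maximum in the definition of $\|R\|_{\mathrm{grp*}}$, and define $M^*\in\mathbb{R}^{p\times n}$ to be zero outside of column $t^*$ and off the rows in $\mathcal{J}_{g^*}$, with
\[
M^*_{\mathcal{J}_{g^*},t^*} \;=\; p_{g^*}^{-1/2}\,\frac{R_{\mathcal{J}_{g^*},t^*}}{\|R_{\mathcal{J}_{g^*},t^*}\|_2}.
\]
Under the non-overlap assumption, only the $g = g^*$, $t=t^*$ summand in the definition of $\|M^*\|_{\mathrm{grp}}$ is nonzero, which yields $\|M^*\|_{\mathrm{grp}} = p_{g^*}^{1/2}\cdot p_{g^*}^{-1/2} = 1$. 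Meanwhile, $\langle R, M^*\rangle = p_{g^*}^{-1/2}\|R_{\mathcal{J}_{g^*},t^*}\|_2 = \|R\|_{\mathrm{grp*}}$, which matches the upper bound and therefore proves duality.

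The main delicate point is the overlapping case: if a coordinate $j$ lies in several $\mathcal{J}_g$, then a single block of $M^*$ contributes to the grp-norm through each group containing $\mathcal{J}_{g^*}\cap\mathcal{J}_g$, so the candidate extremizer above has grp-norm strictly larger than $1$ in general and the upper bound need not be tight. In that regime $\|\cdot\|_{\mathrm{grp*}}$ is only an upper bound on the true dual norm; this is sufficient for the way the lemma is invoked in the paper (via the Hölder inequality in the proofs of Proposition~\ref{Prop:SmallEvent} and Proposition~\ref{Prop:Deterministic}), and the exact duality holds under the non-overlap convention.
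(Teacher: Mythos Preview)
Your proof is correct and uses essentially the same mechanism as the paper: one inequality via Cauchy--Schwarz on each $(\mathcal{J}_g,t)$ block, the other via an explicit extremizer. The only difference is that you verify $\|R\|_{\mathrm{grp*}}=\sup_{\|M\|_{\mathrm{grp}}\leq 1}\langle R,M\rangle$, whereas the paper proves the reverse identity $\|M\|_{\mathrm{grp}}=\sup_{\|R\|_{\mathrm{grp*}}\leq 1}\langle R,M\rangle$ (constructing an $R$ from $M$ rather than an $M$ from $R$); the two are equivalent by reflexivity in finite dimensions, though the paper's form is the one directly invoked in Proposition~\ref{prop: closed form}. Your remark about overlapping groups is apt and applies equally to the paper's argument, since both rely on the block decomposition $\langle R,M\rangle=\sum_{g,t}\langle R_{\mathcal{J}_g,t},M_{\mathcal{J}_g,t}\rangle$, which presumes the $\mathcal{J}_g$ are disjoint.
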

\begin{proof}
To prove the lemma, it suffices to show that $\left\lVert M \right\rVert_{\mathrm{grp}}=\sup\limits_{\left\lVert R \right\rVert_{\mathrm{grp*}}\leq 1}\langle R,M \rangle$ for all $M \in \mathbb{R}^{p\times (n-1)}$. First, for any $M \in \mathbb{R}^{p\times (n-1)}$, let $M_{\mathcal{J}_g,t}$ be the $t$th column of $M_{\mathcal{J}_g}$. Define $\tilde{R} = \tilde R(M)$ such that
\[
\tilde R_{\mathcal{J}_g,t} = \frac{p_g^{1/2} M_{\mathcal{J}_g,t}}{\max\bigl\{\|M_{\mathcal{J}_g,t}\|_2, 1\bigr\}}.
\]
Then, $\|\tilde R\|_{\mathrm{grp*}}\leq\max_{g\in[G]}\max_{t\in[n-1]}p_g^{-1/2}p_g^{1/2}\frac{\|M_{\mathcal{J}_g,t}\|_2}{\|M_{\mathcal{J}_g,t}\|_2}=1$. Hence, \begin{align*}
\sup_{\left\lVert R \right\rVert_{\mathrm{grp*}}\leq 1}\langle R,M\rangle \geq \langle \tilde{R},M\rangle&=\sum_{g=1}^{G}\sum_{t=1}^{n-1}p_g^{1/2}\frac{\langle M_{\mathcal{J}_g,t}, M_{\mathcal{J}_g,t}\rangle }{\|M_{\mathcal{J}_g,t}\|_{2}}\\
&=\sum_{g=1}^{G}\sum_{t=1}^{n-1}p_g^{1/2}\|M_{\mathcal{J}_g,t}\|_2=\|M\|_{\mathrm{grp}}.
\end{align*}
On the other hand, for any $R$ such that $\|R\|_{\mathrm{grp*}}\leq 1$, we have $\|R_{\mathcal{J}_g,t}\|_2\leq p_g^{1/2} $ for all $g$ and $t$. Consequently, by the Cauchy--Schwarz inequality,
\begin{align*}
\langle R,M\rangle &= \sum_{g\in[G]}\sum_{t\in[n-1]} \langle R_{\mathcal{J}_g,t},M_{\mathcal{J}_g,t}\rangle \leq \sum_{g\in[G]}\sum_{t\in[n-1]}\|R_{\mathcal{J}_g,t}\|_2\|M_{\mathcal{J}_g,t}\|_2 \\
&\leq \sum_{g\in[G]}\sum_{t\in[n-1]}p_g^{1/2}\|M_{\mathcal{J}_g,t}\|_2 = \|M\|_{\mathrm{grp}},
\end{align*}
thus establishing the result.
\end{proof}

\begin{prop}
\label{prop: closed form}
Let $\mathcal{S}=\{M\in \mathbb{R}^{p\times(n-1)}:\left\lVert M\right\rVert_{\mathrm{F}}\leq 1\}$. For $T \in \mathbb{R}^{p\times(n-1)}$, $\lambda>0$, we have
\[
\argmax_{M\in \mathcal{S}}\Big\{\langle T,M\rangle-\lambda\|M\|_{\mathrm{{grp}}}\Big\}  =\frac{T-R^*}{\left\lVert T-R^*\right\rVert_{\mathrm{F}}},
\]
where $R^*$ satisfies $R_{\mathcal{J}_g,t}^*=T_{\mathcal{J}_g,t} \min\big\{\frac{\lambda p_g^{1/2}}{\|T_{\mathcal{J}_g,t}\|_{\mathrm{F}}}, 1\big\}$.

\end{prop}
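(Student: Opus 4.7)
The plan is to verify directly that $M^* := (T-R^*)/\|T-R^*\|_{\mathrm{F}}$ achieves the supremum, by matching an upper bound derived from duality against the value of the objective at $M^*$. First I will invoke Lemma~\ref{Lemma:DualNorm} to rewrite, for any $M \in \mathcal{S}$,
\[
\lambda\|M\|_{\mathrm{grp}} = \sup_{\|R\|_{\mathrm{grp*}}\leq \lambda} \langle R, M\rangle,
\]
and observe that $R^*$ as defined in the proposition satisfies $\|R^*_{\mathcal{J}_g,t}\|_2 \leq \lambda p_g^{1/2}$ in every block, so $\|R^*\|_{\mathrm{grp*}} \leq \lambda$. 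Combining this with Cauchy--Schwarz yields, for every $M\in\mathcal{S}$,
\[
\langle T, M\rangle - \lambda\|M\|_{\mathrm{grp}} \leq \langle T - R^*, M\rangle \leq \|T-R^*\|_{\mathrm{F}}\|M\|_{\mathrm{F}} \leq \|T-R^*\|_{\mathrm{F}}.
\]

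Next I will show that $M^*$ attains this bound. Clearly $\|M^*\|_{\mathrm{F}}=1$, and since $M^*$ is a positive scalar multiple of $T-R^*$ the Cauchy--Schwarz step is tight. What remains is the complementary slackness identity $\langle R^*, M^*\rangle = \lambda\|M^*\|_{\mathrm{grp}}$. Because the groups are non-overlapping, both sides decompose as sums over $(g,t)$ of block contributions. Within each block I split into the two cases built into the definition of $R^*$: if $\|T_{\mathcal{J}_g,t}\|_2 \leq \lambda p_g^{1/2}$ then $R^*_{\mathcal{J}_g,t}=T_{\mathcal{J}_g,t}$ and $(T-R^*)_{\mathcal{J}_g,t}=0$, contributing zero to both sides; otherwise both $R^*_{\mathcal{J}_g,t}$ and $(T-R^*)_{\mathcal{J}_g,t}$ are positive scalar multiples of $T_{\mathcal{J}_g,t}$ with $\|R^*_{\mathcal{J}_g,t}\|_2 = \lambda p_g^{1/2}$, which after a direct computation gives $\langle R^*_{\mathcal{J}_g,t}, M^*_{\mathcal{J}_g,t}\rangle = \lambda p_g^{1/2} \|M^*_{\mathcal{J}_g,t}\|_2$. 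Summing over $(g,t)$ produces the required equality, so the displayed chain of inequalities above is tight at $M^*$, establishing optimality.

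The main obstacle I expect is keeping the complementary slackness argument clean; it hinges on the alignment of $R^*_{\mathcal{J}_g,t}$ with $T_{\mathcal{J}_g,t}$, and hence with $(T-R^*)_{\mathcal{J}_g,t}$, which is precisely what the shrinkage form of $R^*$ provides, and it is at this point that the non-overlapping assumption on the groups is essential (so that both the Frobenius norm and the dual group norm decouple blockwise). A minor loose end is the degenerate case $T = R^*$, equivalently $\|T\|_{\mathrm{grp*}} \leq \lambda$, in which $M^*$ is undefined; this case is easily handled separately by observing that the objective is then bounded above by zero on $\mathcal{S}$ and is attained at $M=0$.
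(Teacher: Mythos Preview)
Your argument is correct and complete, including the careful handling of the degenerate case $\|T\|_{\mathrm{grp*}}\leq\lambda$, which the paper glosses over. The route, however, differs from the paper's. The paper writes the objective as $f(M)=\inf_{\|R\|_{\mathrm{grp*}}\leq 1}\langle T-\lambda R,M\rangle$ using Lemma~\ref{Lemma:DualNorm}, then invokes Fan's minimax theorem to swap $\sup_M$ and $\inf_R$, observes that the outer problem becomes $\inf_{\|R\|_{\mathrm{grp*}}\leq 1}\|T-\lambda R\|_{\mathrm{F}}$, solves this blockwise to \emph{derive} $R^*$, and finally reads off the maximiser in $M$ from the saddle point. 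You instead bypass the minimax theorem entirely: you take $R^*$ as given by the statement, use it only to produce the weak-duality upper bound $\|T-R^*\|_{\mathrm{F}}$, and then verify tightness by a direct complementary-slackness check on each $(g,t)$ block. Your approach is more elementary (no appeal to Fan's theorem) and makes the role of the non-overlapping assumption more explicit; the paper's approach has the advantage of discovering $R^*$ rather than verifying it, and of making uniqueness of the optimiser transparent via uniqueness of the dual minimiser.
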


\begin{proof}
Define functions $h:\mathbb{R}^{p\times (n-1)} \times \mathbb{R}^{p\times (n-1)} \to \mathbb{R}$ and $f,g:\mathbb{R}^{p\times (n-1)} \to \mathbb{R}$ such that for $M, R \in \mathbb{R}^{p\times (n-1)}$, $h(M,R)=\langle T-\lambda R,M\rangle$ and $f(M)= \inf_{\|R\|_{\mathrm{grp*}}\leq 1} h(M,R)$ and  $g(R) = \sup_{M\in \mathcal{S}} h(M,R)$. By~\eqref{Eq:GrpDualNorm} and Lemma~\ref{Lemma:DualNorm}, we have that 
\[
\langle T,M\rangle-\lambda \|M\|_{\mathrm{grp}}=\langle T,M\rangle-\lambda \sup_{\|R\|_{\mathrm{grp*}}\leq 1}\langle R,M \rangle=\inf\limits_{\left\lVert R \right\rVert_{\mathrm{grp*}}\leq 1} \langle T-\lambda R, M\rangle = f(M).
\]
By the minimax equality theorem \citep[Theorem~1]{Fan1953}, we obtain that
\[
\sup_{M\in \mathcal{S}}f(M)=\sup_{M\in \mathcal{S}}\inf_{\|R\|_{\mathrm{grp*}}\leq 1}h(M,R)=\inf_{\|R\|_{\mathrm{grp*}}\leq 1}\sup_{M\in \mathcal{S}}h(M,R)=\inf_{\|R\|_{\mathrm{grp*}}\leq 1}g(R).
\]
Observe that $g(R) = \|T-\lambda R\|_{\mathrm{F}}$. To find the optimiser $R^* \in \argmin_{\|R\|_{\mathrm{grp*}}\leq 1}\|T-\lambda R\|_{\mathrm{F}}$, we consider the $G$ groups individually. For each group $g$, and in the $t$th column, if $\|T_{\mathcal{J}_g,t}\|_{2} \leq \lambda p_g^{1/2}$, then $R^*_{\mathcal{J}_g,t} = T_{\mathcal{J}_g,t}/\lambda$; and if $\|T_{\mathcal{J}_g,t}\|_{2} > \lambda p_g^{1/2}$, then $R^*_{\mathcal{J}_g,t} = p_g^{1/2}T_{\mathcal{J}_g,t} / \|T_{\mathcal{J}_g,t}\|_{2}$. Since the minimizer of $g(R)$ is unique, we have that

\[
\argmax_{M\in \mathcal{S}} f(M) = \argmax_{M\in \mathcal{S}} h(M, R^*) = \frac{T-\lambda R^*}{\|T-\lambda R^*\|_{\mathrm{F}}},
\]
as desired.
\end{proof}

\begin{lemma}
\label{Lemma:GroupHolder}
For any $A, B \in \mathbb{R}^{p\times n}$, we have $\langle A,B \rangle \leq \|A\|_{\mathrm{grp}}\|B\|_{\mathrm{grp*}}$.
\end{lemma}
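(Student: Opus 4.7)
The plan is to deduce this Hölder-type inequality as an immediate consequence of the duality already established in Lemma~\ref{Lemma:DualNorm}. That lemma provides the variational representation
\[
\|A\|_{\mathrm{grp}} = \sup_{\|R\|_{\mathrm{grp*}} \leq 1} \langle R, A\rangle.
\]
The case $B = 0$ is trivial (both sides vanish), so I may assume $\|B\|_{\mathrm{grp*}} > 0$ and plug in the specific choice $R = B / \|B\|_{\mathrm{grp*}}$, which lies in the feasible set since it has unit dual norm. The supremum bound then reads $\|A\|_{\mathrm{grp}} \geq \langle B, A\rangle / \|B\|_{\mathrm{grp*}}$, and rearranging (using symmetry of the inner product) gives the claim.

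A second, self-contained approach is to mimic the Cauchy--Schwarz calculation in the second half of the proof of Lemma~\ref{Lemma:DualNorm}. One would expand the inner product as a double sum $\langle A, B\rangle = \sum_{g \in [G]} \sum_{t \in [n]} \langle A_{\mathcal{J}_g, t}, B_{\mathcal{J}_g, t}\rangle$ (following the convention used in the preceding proof in the paper for overlapping groups), apply Cauchy--Schwarz within each block $\mathbb{R}^{|\mathcal{J}_g|}$ to bound each term by $\|A_{\mathcal{J}_g, t}\|_2 \|B_{\mathcal{J}_g, t}\|_2$, and then insert the trivial factor $p_g^{1/2} \cdot p_g^{-1/2}$. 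Using the definition~\eqref{Eq:GrpDualNorm} to absorb $p_g^{-1/2}\|B_{\mathcal{J}_g, t}\|_2 \leq \|B\|_{\mathrm{grp*}}$ outside the sum, the remaining quantity $\sum_{g} p_g^{1/2} \sum_{t} \|A_{\mathcal{J}_g, t}\|_2 = \sum_g p_g^{1/2} \|A_{\mathcal{J}_g}\|_{2,1}$ is exactly $\|A\|_{\mathrm{grp}}$.

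There is no genuine obstacle here: the lemma is essentially a restatement of the fact that $\|\cdot\|_{\mathrm{grp}}$ and $\|\cdot\|_{\mathrm{grp*}}$ form a dual pair of norms in a finite-dimensional inner product space. I would favour the first (one-line) route since it reuses Lemma~\ref{Lemma:DualNorm} directly and avoids re-doing any calculation.
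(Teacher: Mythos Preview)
Your proposal is correct. Both of your routes work, and in fact your second approach (block-wise Cauchy--Schwarz followed by the $p_g^{1/2}\cdot p_g^{-1/2}$ trick) is exactly what the paper does: it writes $\langle A,B\rangle = \sum_{g,t}\langle A_{\mathcal{J}_g,t},B_{\mathcal{J}_g,t}\rangle$, applies Cauchy--Schwarz, and then bounds the sum by $\bigl(\sum_{g,t} p_g^{1/2}\|A_{\mathcal{J}_g,t}\|_2\bigr)\bigl(\max_{g,t} p_g^{-1/2}\|B_{\mathcal{J}_g,t}\|_2\bigr)$. Your preferred first route, pulling the inequality directly out of the variational characterisation in Lemma~\ref{Lemma:DualNorm}, is a genuinely different (and shorter) argument: it treats the result as a formal consequence of duality rather than redoing the block-level estimate. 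The advantage of your route is economy and no repetition; the paper's direct computation has the minor advantage of being self-contained and making the $p_g^{1/2}$ weighting explicit, but there is no mathematical content lost either way.
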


\begin{proof}
By Cauchy--Schwarz inequality, we have that 
\begin{align*}
\langle A,B\rangle &=\sum_{g,t}\langle A_{\mathcal{J}_g,t},B_{\mathcal{J}_g,t}\rangle\leq\sum_{g\in[G],t\in[n]}\|A_{\mathcal{J}_g,t}\|_{\mathrm{F}}\|B_{\mathcal{J}_g,t}\|_{\mathrm{F}}\\
&\leq \biggl(\sum_{g\in[G],t\in[n]} p_g^{1/2}\|A_{\mathcal{J}_g,t}\|_{\mathrm{F}} \biggr)\biggl(\max_{g\in[G],t\in[n]} p_g^{-1/2}\|B_{\mathcal{J}_g,t}\|_{\mathrm{F}} \biggr) = \|A\|_{\mathrm{grp}}\|B\|_{\mathrm{grp*}}.
\end{align*}
as desired.
\end{proof}

\begin{lemma}
\label{Lem:GrpNormToFrobNorm}
Let $p_g = |\mathcal{J}_g|$ and suppose further that there exists a universal constant $C_1>0$, such that $\max_{j\in[p]} |\{g:j\in \mathcal{J}_g\}| \leq C_1$. Then, for any $M\in\mathbb{R}^{p\times n}$, we have  $\|M\|_{\mathrm{grp}}\leq (C_1 n\sum_g p_g)^{1/2}\|M\|_{\mathrm{F}}$.
\end{lemma}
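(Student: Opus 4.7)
The plan is to prove the inequality by a two-step argument: first apply Cauchy--Schwarz to the double sum defining $\|M\|_{\mathrm{grp}}$, and then use the bounded-overlap hypothesis $\max_{j\in[p]}|\{g:j\in\mathcal{J}_g\}|\leq C_1$ to relate the sum of group-wise squared column norms back to the Frobenius norm of $M$.

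Concretely, I would begin by unpacking the definition to write
\[
\|M\|_{\mathrm{grp}} = \sum_{g=1}^{G}\sum_{t=1}^{n} p_g^{1/2}\,\|M_{\mathcal{J}_g,t}\|_2.
\]
Then Cauchy--Schwarz applied to the $Gn$-term sum with factors $p_g^{1/2}$ and $\|M_{\mathcal{J}_g,t}\|_2$ gives
\[
\|M\|_{\mathrm{grp}} \leq \Bigl(\sum_{g=1}^{G}\sum_{t=1}^{n} p_g\Bigr)^{1/2}\Bigl(\sum_{g=1}^{G}\sum_{t=1}^{n} \|M_{\mathcal{J}_g,t}\|_2^{\,2}\Bigr)^{1/2} = \bigl(n\textstyle\sum_g p_g\bigr)^{1/2}\Bigl(\sum_{g=1}^{G}\|M_{\mathcal{J}_g}\|_{\mathrm{F}}^{\,2}\Bigr)^{1/2}.
\]

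The only remaining step is to bound $\sum_g \|M_{\mathcal{J}_g}\|_{\mathrm{F}}^2$ by $C_1\|M\|_{\mathrm{F}}^2$. Expanding,
\[
\sum_{g=1}^{G}\|M_{\mathcal{J}_g}\|_{\mathrm{F}}^{\,2} = \sum_{g=1}^{G}\sum_{j\in\mathcal{J}_g}\sum_{t=1}^{n} M_{j,t}^{\,2} = \sum_{j=1}^{p}\sum_{t=1}^{n} M_{j,t}^{\,2}\,|\{g:j\in\mathcal{J}_g\}| \leq C_1\|M\|_{\mathrm{F}}^{\,2},
\]
where the last inequality uses the overlap hypothesis. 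Substituting this into the previous display yields the claimed bound. There is no real obstacle here; the one point that needs care is the swap of the order of summation in this last display, which is where the hypothesis on the number of groups containing any coordinate $j$ enters.
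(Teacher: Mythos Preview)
Your proof is correct and essentially the same as the paper's. The only cosmetic difference is that the paper applies Cauchy--Schwarz twice in succession (first over $t$ for fixed $g$, then over $g$), whereas you apply it once directly to the $Gn$-term double sum; both routes land on the identical intermediate bound $(n\sum_g p_g)^{1/2}(\sum_g \|M_{\mathcal{J}_g}\|_{\mathrm{F}}^2)^{1/2}$ before invoking the overlap hypothesis.
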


\begin{proof}
Define $m$ with $m_{\mathcal{J}_g,t}=\|M_{\mathcal{J}_g,t}\|_{\mathrm{F}}$. Then by applying the Cauchy--Schwarz inequality twice, we have
\begin{align*}
   \|M\|_{\mathrm{grp}}&=\sum_{g\in[G]}p_g^{1/2} \sum_{t=1}^n \|M_{\mathcal{J}_g,t}\|_2 \leq\sum_{g\in [G]}(n p_g)^{1/2} \|M_{\mathcal{J}_g}\|_{\mathrm{F}}\\
   &\leq\sqrt{n}\biggl(\sum_{g\in[G]}p_g\biggr)^{1/2}\biggl(\sum_{g\in[G]}\|M_{\mathcal{J}_g}\|_{\mathrm{F}}^2\biggr)^{1/2} \leq \biggl(C_1 n\sum_{g\in [G]} p_g\biggr)^{1/2}\|M\|_{\mathrm{F}},
\end{align*}
as desired.
\end{proof}

\begin{prop}
\label{Prop:Deterministic}
Let $p_g = |\mathcal{J}_g|$ and suppose further that there exists a universal constant $C_1>0$, such that $\max_{j\in[p]} |\{g:j\in \mathcal{J}_g\}| \leq C_1$. Let $A$ be a rank one matrix with $A = \delta vu^\top$ for $\delta > 0$, $\|v\|_2 = \|u\|_2 = 1$ and $\sum_{g: v_{\mathcal{J}_g} \neq 0} p_{g}\leq k$. Suppose $T\in\mathbb{R}^{p\times (n-1)}$ satisfies $\|T-A\|_{\mathrm{grp*}}\leq \lambda$ for some $\lambda>0$, and let $\mathcal{S} = \{M\in \mathbb{R}^{p\times (n-1)}: \|M\|_{\mathrm{F}} \leq 1\}$. Then, for any 
\[
\hat M\in \argmax_{M\in \mathcal{S}}\big\{\langle T,M\rangle- \lambda \|M\|_{\mathrm{grp}}\bigr\},
\]
we have
\[
\|vu^\top -\hat{M}\|_{\mathrm{F}}\leq\frac{4\lambda (C_1 nk)^{1/2}}{\delta},
\]
and
\[
\max\{\sin\angle(v,\hat{v}), \sin\angle(u,\hat{u})\}\leq\frac{8\lambda (C_1 nk)^{1/2}}{\delta}.
\]
\end{prop}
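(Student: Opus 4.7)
The plan is to exploit the first-order optimality of $\hat{M}$ to produce a cone-type inequality for the error $D = \hat{M} - vu^\top$ restricted to the active group set $S = \{g : v_{\mathcal{J}_g} \neq 0\}$, combine this with curvature coming from the Frobenius-ball constraint to bound $\|D\|_{\mathrm{F}}$, and then deduce the sine-angle bound via a Weyl--Wedin perturbation argument.

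Since $vu^\top \in \mathcal{S}$, the optimality of $\hat M$ gives $\langle T, D\rangle \geq \lambda(\|\hat M\|_{\mathrm{grp}} - \|vu^\top\|_{\mathrm{grp}})$. Splitting $T = A + (T - A)$ and applying Lemma~\ref{Lemma:GroupHolder} with the hypothesis $\|T - A\|_{\mathrm{grp*}} \leq \lambda$ controls the noise term by $\lambda \|D\|_{\mathrm{grp}}$. Writing $\|M\|_{\mathrm{grp}, S}$ for the sum $\sum_{g\in S} p_g^{1/2}\|M_{\mathcal{J}_g}\|_{2,1}$ and similarly for $S^{\mathrm{c}}$, the vanishing $(vu^\top)_{\mathcal{J}_g} = 0$ for $g\notin S$ together with the groupwise reverse triangle inequality gives
\[
\|\hat M\|_{\mathrm{grp}} - \|vu^\top\|_{\mathrm{grp}} \geq \|D\|_{\mathrm{grp}, S^{\mathrm{c}}} - \|D\|_{\mathrm{grp}, S}.
\]
Substituting back and letting the $\|D\|_{\mathrm{grp}, S^{\mathrm{c}}}$ contributions cancel produces the one-sided inequality $\langle A, D\rangle \geq -2\lambda \|D\|_{\mathrm{grp}, S}$.

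Next I would produce a matching upper bound for $\langle A, D\rangle$ from the Frobenius constraint. Expanding $\|D\|_{\mathrm{F}}^2 = \|\hat M\|_{\mathrm{F}}^2 + 1 - 2\langle vu^\top, \hat M\rangle$ and using $\|\hat M\|_{\mathrm{F}}\leq 1$ gives $\langle vu^\top, \hat M\rangle \leq 1 - \tfrac{1}{2}\|D\|_{\mathrm{F}}^2$, so $\langle A, D\rangle = \delta\langle vu^\top, \hat M\rangle - \delta \leq -\tfrac{\delta}{2}\|D\|_{\mathrm{F}}^2$. Combined with the cone inequality this yields $\delta \|D\|_{\mathrm{F}}^2 \leq 4\lambda \|D\|_{\mathrm{grp}, S}$. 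Repeating the Cauchy--Schwarz argument of Lemma~\ref{Lem:GrpNormToFrobNorm} but summing only over $g \in S$, and using $\max_{j}|\{g : j\in\mathcal{J}_g\}|\leq C_1$ together with $\sum_{g\in S} p_g \leq k$, delivers $\|D\|_{\mathrm{grp}, S} \leq (C_1 n k)^{1/2}\|D\|_{\mathrm{F}}$. Cancelling one factor of $\|D\|_{\mathrm{F}}$ gives $\|D\|_{\mathrm{F}}\leq 4\lambda (C_1 n k)^{1/2}/\delta$, the first claimed bound.

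For the sine-angle statement, set $\epsilon = 4\lambda(C_1 n k)^{1/2}/\delta$, so that $\|D\|_{\mathrm{F}}\leq\epsilon$. Weyl's inequality applied to $vu^\top$ (whose singular values are $1, 0, \ldots, 0$) and $\hat M$ gives $\sigma_1(\hat M)\geq 1-\epsilon$. Wedin's $\sin\theta$ theorem, applied to the top singular triplets with gap $\sigma_1(\hat M) - \sigma_2(vu^\top) \geq 1-\epsilon$, then yields $\max\{\sin\angle(v,\hat v), \sin\angle(u,\hat u)\}\leq \epsilon/(1-\epsilon)$ whenever $\epsilon<1$. When $\epsilon\leq 1/2$ this is at most $2\epsilon = 8\lambda(C_1 n k)^{1/2}/\delta$; when $\epsilon>1/2$ the target $8\lambda(C_1 n k)^{1/2}/\delta = 2\epsilon > 1$ majorises the sine loss trivially. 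The main obstacle I expect is the careful bookkeeping in the cone-split step when groups overlap: both the groupwise reverse triangle inequality and the restricted Lemma~\ref{Lem:GrpNormToFrobNorm} bound lean on the hypothesis $\max_j|\{g : j\in \mathcal{J}_g\}|\leq C_1$, which is precisely what keeps the overlap cost to a benign $C_1^{1/2}$ multiplicative factor throughout.
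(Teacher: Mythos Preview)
Your proof is correct and follows essentially the same approach as the paper: both combine the basic inequality at $vu^\top$, the H\"older-type bound from Lemma~\ref{Lemma:GroupHolder}, a groupwise triangle-inequality reduction to the active set $\mathcal{G}_0$, the curvature inequality $\|vu^\top-\hat M\|_{\mathrm{F}}^2\leq (2/\delta)\langle A, vu^\top-\hat M\rangle$ (which the paper quotes from \citet[Lemma~2]{WS2018} and you derive directly), and the restricted form of Lemma~\ref{Lem:GrpNormToFrobNorm}. Your Wedin argument for the sine-angle bound is just an explicit version of what the paper defers to \citet[online supplement]{WS2018}; one small remark is that the groupwise reverse triangle inequality itself does not require the overlap bound $C_1$---that hypothesis enters only in the restricted Lemma~\ref{Lem:GrpNormToFrobNorm} step.
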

\begin{proof}
Define $\mathcal{G}_0=\{g: v_{\mathcal{J}_g}\neq 0\}$. Since $vu^\top \in \mathcal{S}$, from the basic inequality, we have
\begin{equation}
\label{Eq:BasicIneq}
\langle T, vu^\top\rangle - \lambda\|vu^\top\|_{\mathrm{grp}} \leq \langle T, \hat M \rangle - \lambda\|\hat M\|_{\mathrm{grp}}.
\end{equation}
When $\|A-T\|_{\mathrm{grp*}}\leq\lambda$,  or equivalently, $p_g^{-1/2}\|A_{\mathcal{J}_g, t}-T_{\mathcal{J}_g, t}\|_2 \leq\lambda$ for all $g\in[G]$ and $t\in [n-1]$, we have by \citet[Lemma~2]{WS2018} and~\eqref{Eq:BasicIneq} that
\begin{align*}
\|vu^\top -\hat{M}\|_{\mathrm{F}}^2&\leq \frac{2}{\delta}\langle A, vu^\top - \hat{M}\rangle \leq \frac{2}{\delta}\bigl(\langle T, vu^\top - \hat{M}\rangle + \langle A-T, vu^\top - \hat{M}\rangle\bigr)\\
&\leq \frac{2\lambda}{\delta}\bigl(\|vu^\top\|_{\mathrm{grp}}-\|\hat{M}\|_{\mathrm{grp}}+\|vu^\top -\hat{M}\|_{\mathrm{grp}}\bigr)\\
& = \frac{4\lambda}{\delta}\sum_{g\in \mathcal{G}_0}\sum_{t\in[n-1]} \|(vu^\top - \hat M)_{\mathcal{J}_g, t}\|_2 \leq\frac{4\lambda (C_1 nk)^{1/2}}{\delta}\|vu^\top -\hat{M}\|_{\mathrm{F}},
\end{align*}
where we used Lemma~\ref{Lemma:GroupHolder} in the penultimate inequality and Lemma~\ref{Lem:GrpNormToFrobNorm} in the final bound. This proves the first claim of the proposition, and the second claim follows from the first by the same argument as used in \citet[online supplement (18) and (19)]{WS2018}.
\end{proof}

\begin{prop} 
\label{Prop:SmallEvent}
Let $W$ be an $p\times n$ random matrix with independent $N(0,\sigma^2)$ entries and set $E=\mathcal{T}(W)$. Let $p_g = |\mathcal{J}_g|$ with $p_*=\min_{g\in[G]} p_g$. For any $\delta \in (0,1)$ and $\lambda=\sigma(1+\sqrt{2p_*^{-1} \log(1/\delta)})$, we have that 
\[
\mathbb{P}(\|E\|_{\mathrm{grp*}}>\lambda)\leq (n-1)G\delta.
\]
\end{prop}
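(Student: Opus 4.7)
The plan is to reduce the tail bound on $\|E\|_{\mathrm{grp*}}$ to a collection of Gaussian-norm tail bounds indexed by $(g,t) \in [G] \times [n-1]$, and then apply a union bound.

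First I would identify the exact distribution of each block $E_{\mathcal{J}_g, t}$ for a fixed group $g$ and column $t$. Writing out $E_{j,t} = \sqrt{t(n-t)/n}\bigl(\tfrac{1}{n-t}\sum_{r>t} W_{j,r} - \tfrac{1}{t}\sum_{r\leq t} W_{j,r}\bigr)$ as a linear combination of independent $N(0,\sigma^2)$ entries, a one-line variance calculation gives $\mathrm{Var}(E_{j,t}) = \sigma^2\bigl(\tfrac{t}{n} + \tfrac{n-t}{n}\bigr)=\sigma^2$. Moreover, for $j\neq j'$ the variables $E_{j,t}$ and $E_{j',t}$ depend on disjoint rows of $W$, hence they are independent. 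Consequently $E_{\mathcal{J}_g, t} \sim N_{p_g}(0, \sigma^2 I_{p_g})$ for every fixed $(g,t)$.

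Next I would apply Gaussian concentration to the 1-Lipschitz function $y\mapsto \|y\|_2$ on $\mathbb{R}^{p_g}$. Since $\mathbb{E}\|E_{\mathcal{J}_g, t}\|_2 \leq \sigma\sqrt{p_g}$ by Jensen, Borell's inequality yields
\begin{equation*}
\mathbb{P}\bigl(\|E_{\mathcal{J}_g, t}\|_2 \geq \sigma\sqrt{p_g} + \sigma r\bigr) \leq e^{-r^2/2} \quad \text{for all } r \geq 0.
\end{equation*}
Choosing $r = \sqrt{2\log(1/\delta)}$ and dividing through by $p_g^{1/2}$ gives
\begin{equation*}
\mathbb{P}\Bigl(p_g^{-1/2}\|E_{\mathcal{J}_g, t}\|_2 > \sigma\bigl(1 + p_g^{-1/2}\sqrt{2\log(1/\delta)}\bigr)\Bigr) \leq \delta.
\end{equation*}
Because $p_g \geq p_*$, the right-hand side inside the probability is at most $\lambda$, so for every fixed $(g,t)$ the event $\{p_g^{-1/2}\|E_{\mathcal{J}_g, t}\|_2 > \lambda\}$ has probability at most $\delta$.

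Finally I would take a union bound over the $(n-1)G$ index pairs $(g,t)\in [G]\times [n-1]$, using the definition of $\|\cdot\|_{\mathrm{grp*}}$ as a maximum over exactly these indices, to conclude
\begin{equation*}
\mathbb{P}(\|E\|_{\mathrm{grp*}} > \lambda) \leq (n-1) G \delta.
\end{equation*}
There is no real obstacle here: the only mild care needed is the variance/independence computation that identifies $E_{\mathcal{J}_g,t}$ as a standard isotropic Gaussian in $\mathbb{R}^{p_g}$, after which the argument is a textbook Lipschitz concentration plus union bound. Note that we do not need independence across different $(g,t)$ pairs, which is fortunate since groups may overlap and different columns of $E$ are correlated.
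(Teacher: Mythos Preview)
Your proof is correct and follows the same three-step skeleton as the paper: identify $E_{\mathcal{J}_g,t}\sim N(0,\sigma^2 I_{p_g})$, bound the tail of its Euclidean norm, then take a union bound over the $(n-1)G$ pairs $(g,t)$. The only difference is in the middle step: the paper observes that $\|E_{\mathcal{J}_g,t}/\sigma\|_2^2\sim\chi^2_{p_g}$ and invokes the Laurent--Massart chi-squared tail inequality, whereas you use Gaussian Lipschitz concentration (Borell's inequality) on the $1$-Lipschitz map $y\mapsto\|y\|_2$ together with $\mathbb{E}\|E_{\mathcal{J}_g,t}\|_2\leq\sigma\sqrt{p_g}$. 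Both devices deliver the same threshold $\sigma\bigl(\sqrt{p_g}+\sqrt{2\log(1/\delta)}\bigr)$, so the choice is a matter of taste; your version has the minor advantage of not needing the squared expansion $(1+x)^2\geq 1+2x+2x^2$ that the paper uses to match the Laurent--Massart form.
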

\begin{proof}
By the definition of the CUSUM transformation $\mathcal{T}$ in~\eqref{eq:cusum}, we have that $E_{\mathcal{J}_g,t}\sim N(0, \sigma^2 I_{p_g})$, and $\|E_{\mathcal{J}_g,t}/\sigma \|_2^2\sim \chi_{p_g}^2$ for every $g\in[G]$ and $t\in[n-1]$. Consequently, by a union bound, we have 
\begin{align*}
\mathbb{P}(\|E\|_{\mathrm{grp*}}>\lambda)&\leq \sum_{g\in[G]}\sum_{t\in[n-1]}\mathbb{P}(\|E_{\mathcal{J}_g,t}\|_2^2>p_g\lambda^2)\\
&=\sum_{g\in[G]}\sum_{t\in[n-1]}\mathbb{P}\biggl\{\frac{\|E_{\mathcal{J}_g,t}\|_2^2}{\sigma^2 p_g}> \Bigl(1+\sqrt{2 p_*^{-1}\log(1/\delta)}\Bigr)^2\biggr\}\\
&\leq \sum_{g\in[G]}\sum_{t\in[n-1]}\mathbb{P}\biggl\{\frac{\|E_{\mathcal{J}_g,t}\|_2^2}{\sigma^2 p_g}> 1+2\sqrt{\frac{\log(1/\delta)}{p_g}}+\frac{2(\log{1/\delta})}{p_g} \biggr\} \\
& \leq (n-1)G\delta,
\end{align*}
as desired, where we used \citet[Lemma~1]{LaurentMassart2000} in the final inequality. 
\end{proof}

\end{document}